\documentclass[12pt]{article}
\textwidth=17.5cm
\textheight=23cm
\oddsidemargin -5mm
\evensidemargin -10mm
\topmargin -20mm
\parindent 0mm
\parskip 2mm

\usepackage{tikz}
\usepackage{enumitem}
\usepackage{latexsym,verbatim}
\usepackage{url,color}
\usepackage{amsmath,amssymb,amsthm,pict2e}

\newcommand{\F}{\mathbb{F}}
\newcommand{\x}{\mathbf{x}}
\newcommand{\A}{{\mathcal{A}}}
\newcommand{\U}{{\mathcal{U}}}
\newcommand{\LL}{{\mathcal{L}}}

\newcommand{\vertex}{{\mathcal{V}}}
\newcommand{\edge}{{\mathcal{E}}}
\newcommand{\Num}{{\mathcal{N}}}
\newcommand{\graph}{{\mathcal{G}}}
\newcommand{\SV}{{\mathcal{SV}}}

\newtheorem{theorem}{Theorem}[section]
\newtheorem{lem}[theorem]{Lemma}

\theoremstyle{definition}
\newtheorem{example}[theorem]{Example}
\newtheorem{question}{Question}
\newtheorem{definition}[theorem]{Definition}
\newtheorem{const}[theorem]{Construction}
\newtheorem{rem}[theorem]{Remark}

\newenvironment{eg}{\begin{example} \upshape}{\hfill $\Box$ \end{example}}
\newenvironment{lemma}{\begin{lem} \em}{\hfill $\Box$ \end{lem}}
\newenvironment{defn}{\begin{definition} }{\end{definition}}
\newenvironment{thm}{\begin{theorem} \upshape}{\hfill $\Box$ \end{theorem}}

\bibliographystyle{plain}

\begin{document}

\title{Functional repair codes: a view from projective geometry}
\author{Siaw-Lynn Ng\footnote{ Information Security Group, Royal Holloway, University of London, Egham, Surrey, TW20 0EX,
U.K. S.Ng@rhul.ac.uk.}, Maura B. Paterson\footnote{Department of Economics, Mathematics and Statistics, Birkbeck, University of London, Malet St, London WC1E 7HX, U.K. m.paterson@bbk.ac.uk.}}
\maketitle


\begin{abstract}
Storage codes are used to ensure reliable storage of data in
distributed systems.  Here we consider functional repair codes, where
individual storage nodes that fail may be repaired efficiently
and the ability to recover original data and to further repair failed
nodes is preserved.  There are two predominant approaches to repair
codes: a coding theoretic approach and a vector space approach.  We
explore the relationship between the two and frame the later in terms
of projective geometry.  We find that many of the constructions proposed in the literature can be seen to arise from natural and well-studied geometric objects, and that this perspective gives a framework that provides opportunities for generalisations and new constructions that can lead to greater flexibility in trade-offs between various desirable properties. We also frame the cut-set bound obtained from network coding in terms of projective geometry.

We explore the notion of {\em strictly functional} repair codes, for which there exist nodes that \emph{cannot} be
replaced exactly.  Currently only one known example is given in the literature, due to Hollmann and Poh. We examine this phenomenon from a projective geometry point of view, and discuss how strict functionality can arise.  

Finally, we consider the issue that the view of a repair code as a collection of sets of vector/projective
subspaces is recursive in nature and makes it hard to visualise what a
collection of nodes looks like and how one might approach a
construction. Here we provide another view of using directed graphs that gives us non-recursive criteria for determining whether a family of collections of subspaces constitutes a function, exact, or strictly functional repair code, which may be of use in searching for new codes with desirable properties.
\end{abstract}

\section{Introduction}

The growth of data and an increasing reliance on digital information have
led to much research into ensuring that data can be stored reliably.
One predominant solution is the use of storage codes for distributed
storage systems: a database is coded and stored in multiple nodes
(servers) in such a way that if a number of nodes fail, the data can
still be recovered from the functioning nodes.  One technique used in
practice (for example, RAID \cite{RAID}, Total Recall
\cite{TotalRecall}) is that of erasure coding: for instance, MDS codes
such as the Reed-Solomon code (\cite{MacWilliamsSloane})
can be used to ensure that any number of node failures up to
a certain threshold does not impede the \emph{recovery} of the entire
database.  However, many distributed storage systems also require
additional resilience properties.  In particular, we may want to mitigate node failures: if a node should fail, we would like to
\emph{repair} it using information in some of the functioning nodes so
that the the recovery property of the system still holds.  Clearly one
could do that by simply recovering the entire database and re-encoding
it.  This involves a sometimes unacceptable overhead in storage and
communication.  Much work has been done to minimise the amount of data
to be stored and the amount of data to be transmitted for
repair. Using techniques from network coding, Dimakis \textit{et al.} 
(\cite{NetworkCoding}) showed that one could significantly reduce
the amount of data to be communicated for repair and showed that there
is a tradeoff between storage and repair efficiency.  Since then much
work has been done on modelling and constructing efficient repair
codes. Here we consider two strands of this work.

In \cite{productmatrix}, Rashmi \textit{et al.} proposed a
product-matrix framework for repair codes. This is an essentially
coding theoretic approach, where the database is treated as messages
that are encoded using a generator matrix.  The resulting codewords
are then stored in individual nodes.  Using this framework, repair
codes can be constructed with parameters that sit on various points on
the storage-repair tradeoff curve.  On the other hand, in
\cite{HollmannPoh}, Hollmann and Poh viewed a repair code as a
collection of sets of subspaces of a vector space. Recovery
corresponds to generating the vector space while repair corresponds to
generating a subspace.  In this paper (Section \ref{sub:motivation})
we explore the relationship between these two models and motivate the
interpretation of the vector space model in terms of projective
geometry.  We will see that many constructions arise naturally from
looking at repair codes from a projective geometric point of view
(Section \ref{sec:constructions}) and these include the constructions
in \cite{HollmannPoh, subspacecode}.  We also frame a special case of
the cut-set bound that Dimakis \textit{et al.} obtained from network
coding technique (\cite{NetworkCoding}) in terms of projective spaces.
This proves to be relatively straight forward compared to the original
proof.

There are broadly speaking two types of repair.  In {\em exact}
repair, if a node fails then the new node constructs exactly the same
symbols that the failed node stored.  In {\em functional} repair, the
new node does not necessarily contain the same symbols as the failed
node, but the set of nodes after repair should remain a repair code:
one should still be able to recover the original database, and future
repair should be possible.  We will call a functional repair code that
does not admit exact repair a \emph{strictly functional} repair code.
In this paper we will also clarify what exact and functional repair
means.  One interesting question is whether there exists strictly functional
repair codes.  In Section \ref{sec:constructions} we see that there
are repair codes that can be both functional and exact, but in
\cite{HollmannPoh} there is a construction that is strictly
functional.  This appears to be the only example in the literature so far.
Another aim of this paper (Sections \ref{sub:focal} and \ref{sec:strictly})
is to examine this structure from a
projective geometry point of view and to see if it can be generalised.
We give another example of a strictly functional repair code
which arises from a familiar structure in projective planes 
(Section \ref{sub:dualarcs}).

The study of the strictly functional construction from
\cite{HollmannPoh} is also motivated by the following: the view of a
repair code as a collection of sets of vector/projective subspaces is
recursive in nature: one must be able to derive a new subspace from an
``admissible'' set, and the new subspace, together with all but one of
the subspaces from the ``admissible'' set must again be
``admissible''.  This models the repair property, insisting that
future repairs must be possible.  However, this recursive nature makes
it hard to visualise what a collection of admissible sets look like:
it is hard to discern the ``global view'' of the whole set of nodes
from the ``local view'' of individual node repairs. In the
construction of this strictly functional repair code, a description is
given that uses symmetry to bypass the recursiveness of the definition. This
naturally leads to the question of whether this can be generalised,
and also motivates another view using directed graphs.  We discuss
exact and functional repairs in terms of the properties of these
graphs in Section \ref{sec:digraphs}.

We will make these aims more precise when we introduce notation.  We
would like to note that constructing new efficient
storage codes is not the primary focus of this work, even though many objects in projective geometry appears
to offer good repair as well as flexibility in terms of resilience and
trade-offs between locality and repair. We intend rather to clarify
the definition and properties of functional repair codes, and to
consider their possible relationship with other combinatorial objects.

The structure of this paper is as follows: we will give definitions
and introduce notation in Section \ref{sec:general}, and consider
some motivation for phrasing things in terms of projective geometry
(Section \ref{sub:motivation}).  In Section \ref{sec:constructions} we examine
functional repair codes arising from projective geometry objects, and
study in further detail the strictly functional repair code of
\cite{HollmannPoh} in Sections \ref{sec:strictly} and \ref{sec:simpler}. 
In Section \ref{sec:digraphs} we consider functional repair codes as
digraphs, and in Section \ref{sec:further} we discuss further work.

\section{Definitions and basic properties} \label{sec:general}
An $(m; n, k,r,\alpha,\beta)$-functional repair code stores $m$
information symbols from some finite alphabet $\F$, encoded across $n$
storage nodes.  Each storage node can hold $\alpha$ symbols.  The following
properties hold:
\begin{enumerate}[label=(\Roman*)] 
\item \label{recover} (Recovery)

The original information can be recovered from the data stored on $k$
nodes (the recovery set).

\item \label{repair} (Repair)

If a storage node fails then a newcomer node contacts some set of $r$
surviving nodes (the repair set) and downloads $\beta$ symbols from
each of these $r$ nodes. From these symbols the newcomer node
constructs and stores $\alpha$ symbols in such a way that
\ref{recover} holds and \ref{repair} holds if another node fails.
\end{enumerate}

We note that there is a dichotomy in the definition of the repair set:
in some work (for example, \cite{NetworkCoding,productmatrix}) it is
stipulated that the repair set is \emph{any} set of $r$ surviving
nodes, while in others (for example, \cite{HollmannPoh, Silberstein})
it is only required that there exists \emph{some} $r$ nodes to form
the repair set.  Similarly for the recovery set.  We will continue this
discussion after Definition \ref{defn:functional}.

In {\em exact}
repair, if a node fails then the newcomer node constructs exactly the
same symbols that the failed node stored, while in {\em functional} repair
the newcomer node does not necessarily contain the same symbols as the
failed node, but the set of $n$ nodes after repair should remain an
$(m; n,k,r,\alpha,\beta)$-functional repair code.
A functional repair code that does not admit
exact repair is a \emph{strictly functional} repair code.  (We will make
these definitions more precise in what follows.)  The focus of this paper is
functional repair.

In \cite{HollmannPoh} and various subsequent work, a functional repair code is viewed as a
collection of sets of subspaces of an $m$-dimensional vector space over a finite field $\F_q$.  The underlying storage codes work as follows:
\begin{itemize}
\item For $i$ with $0\leq i\leq n-1$, the $i^{\rm th}$ node is assigned a vector space represented by a specified basis $\{\mathbf{v}^i_0,\mathbf{v}^i_,\dotsc,\mathbf{v}^i_{\alpha-1}\}$.
\item To store a message $\mathbf{x}=(x_0,x_1,\dotsc,x_{m-1})\in \F_q^m$, each node $i$ with $0\leq i\leq n-1$ stores the $\alpha$ scalar values $\{\mathbf{x}\cdot\mathbf{v}^i_0,\mathbf{x}\cdot\mathbf{v}^i_,\dotsc,\mathbf{x}\cdot\mathbf{v}^i_{\alpha-1}\}$.
\item If the vector $(1,0,\dotsc,0)$ is in the span of a set of vectors $\{\mathbf{u_0},\mathbf{\dotsc,u_{t-1}}\}$, then the values $\{\mathbf{x}\cdot\mathbf{u_0},\dotsc,\mathbf{x}\cdot\mathbf{u_{t-1}}\}$ can be used to recover $x_0$.  For, if $(1,0,\dotsc,0)=\sum_{i=0}^{t-1}a_i\mathbf{u_i}$ for $a_i\in\F_q$, then $x_0=\sum_{i=0}^{t-1}a_i\mathbf{x}\cdot\mathbf{u_i}$.  If the vectors $\{\mathbf{u_0},\mathbf{\dotsc,u_{t-1}}\}$ span $\F_q^m$ then the entire message $\mathbf{x}$ can similarly be recovered from these values.
\end{itemize}

The properties of the storage code are hence determined by the relationship between the subspaces that correspond to the nodes, in particular, the spans and the intersections of these subspaces.  The projective space ${\rm PG}(m-1,q)$ provides a very natural setting for studying spans and intersections in $\F_q^m$.  It can make the relationship between spaces easier to visualise and, furthermore, many natural geometric structures in ${\rm PG}(m-1,q)$ have well-understood span/intersection properties that can be useful in constructing storage codes.  In what follows we will translate the vector-space definitions of \cite[Definitions 3.1, 3.2]{HollmannPoh} into the language of projective spaces.  We will see that this provides new insight into existing constructions of repair codes, such as \cite{HollmannPoh, subspacecode},  as well as suggesting useful frameworks for new construction of such codes.  

\begin{defn}[$(r, \beta)$-repair] \label{defn:repair}
Let $\Sigma={\rm PG}(m-1,q)$ be an $(m-1)$-dimensional projective space
over the finite field $\F_q$.  
We say that we can obtain a subspace $U'$ of $\Sigma$ 
from a set $\U$ of subspaces of $\Sigma$
by $(r, \beta)$-repair if there is an
$r$-subset $\{U_{i_1}, \ldots, U_{i_r}\}$ in $\U$ such that there
exists a $(\beta-1)$-dimensional subspace $W_{i_j} \subseteq U_{i_j}$
for each $i_j$ such that 
$U' \subseteq \langle W_{i_1}, \ldots, W_{i_r} \rangle$.
\end{defn}

\begin{defn}[Functional repair codes]\label{defn:functional}
Let $\Sigma={\rm PG}(m-1,q)$ and let $\A$ be a collection of $(n-1)$-sets 
$\U$ of $(\alpha-1)$-dimensional subspaces of $\Sigma$ such that:

\begin{enumerate}[label=(\Alph*)]
\item \label{pjfn:recovery} (Recovery)

For each set $\U\in\A$ there is a $k$-subset $\{U_{i_1},\dotsc,U_{i_k}\}$ of the subspaces in $\U$ whose span is all of $\Sigma$.

\item \label{pjfn:repair} (Repair)

Given any $(n-1)$-set $\U = \{U_1, \ldots, U_{n-1}\}$ in $\A$, there
exists an $(\alpha-1)$-dimensional subspace $U_n\subset \Sigma$ that can be obtained
from $\U$ by $(r, \beta)$-repair, such that for every $i=1, \ldots
n-1$, $\U \cup \{U_n\} \setminus \{U_i\}$ is again in $\A$.
\end{enumerate}
We will call $(\Sigma={\rm PG}(m-1, q), \A)$ an $(m; n,
k,r,\alpha,\beta)$-functional repair code (or $(m; n,
k,r,\alpha,\beta)$-FRC for convenience).  
\end{defn}

Here $\A$ corresponds to all possible sets of $n-1$ subspaces that
belong to the nodes remaining after a single node has failed.  The
repair property ensures that there is always a suitable subspace that
can be constructed by $(r,\beta)$-repair from these nodes in order to
construct a replacement for the node that has failed.  Clearly here we
require that there exists \emph{some} recovery set and \emph{some}
repair set, although in many of the constructions we describe in
Section \ref{sec:constructions}, repair and recovery can be effected
by arbitrary sets.  We will clarify each case as we go along.

To avoid triviality we assume that 
$m, n \ge 2$, 
$1 \le k < n$, 
$k \le r \le n-1$, 
$1 \le \alpha \le m-1$, 
and $1 \le \beta \le \alpha$.

\begin{definition}
Let $(\Sigma={\rm PG}(m-1, q), \A)$ be an $(m; n, k,r,\alpha,\beta)$-functional
repair code.  An $n$-set $\{U_1, \ldots, U_n\}$ of
$(\alpha-1)$-dimensional subspaces of $\Sigma$ with the property that $\{U_1,
\ldots, U_n\} \setminus \{U_j\} \in \A$ for all $j \in \{1, \ldots,
n\}$ is said to be {\em repairable}.
\end{definition}
It is the repairable sets corresponding to $(\Sigma,\A)$ that can be used as storage codes; if any node fails, the repair property then ensures that the resulting $(n-1)$-set permits a new repairable set to be obtained through $(r,\beta)$-repair.  Now we define exact and strictly functional repairs:

\begin{defn}[Exact repair]\label{defn:exact}
Let $(\Sigma={\rm PG}(m-1, q), \A)$ be an $(m; n, k,r,\alpha,\beta)$-functional
repair code.  We say that $(\Sigma, \A)$ is an exact repair
code if for any repairable set $\{U_1, \ldots, U_n\}$ we have the additional property that $U_i$ can be obtained by $(r, \beta)$-repair from $\{U_1,
\ldots, U_n\} \setminus \{U_i\}$ for any $U_i \in \{U_1, \ldots, U_n\}$.
\end{defn}
We observe that if $(\Sigma,\A)$ is an exact repair code, then any for any repairable set $R=\{U_1,U_2,\dotsc,U_n\}$, the collection $\A^\prime=\{R\setminus \{U_i\}|1\leq i\leq n\}$ has the property that $(\Sigma,\A^\prime)$ is itself an exact repair code.

\begin{defn}[Strictly functional repair]\label{defn:strictlyfunctional}
Let $(\Sigma={\rm PG}(m-1, q), \A)$ be an $(m; n, k,r,\alpha,\beta)$-functional
repair code.  We say that $(\Sigma, \A)$ is a strictly functional
repair code if there exists a repairable set $\{U_1, \ldots, U_n\}$ for which there is a $U_i \in \{U_1, \ldots, U_n\}$ that cannot be
obtained from $\{U_1, \ldots, U_n\} \setminus \{U_i\}$ by $(r,
\beta)$-repair.
\end{defn}
In other words, $(\Sigma,\A)$ is a strictly functional repair code if there is some subspace in a repairable set such that exact repair from the remaining $n-1$ subspaces of the set is not possible.  For these definitions we are focussing on the subspaces stored by the nodes, rather than explicitly referring to bases for these spaces.  This is due to the fact that the elements stored by a node allow them to recover any desired element in the corresponding space, and this ability does not depend on the choice of basis used to describe the space.   We note that
in \cite{subspacecode}, the term {\em functional repair} is used in a scenario in which the failed node and the repaired node correspond to different bases of the same space.  However, this would satisfy Defintion~\ref{defn:exact} for exact repair, and hence would not represent a strictly functional repair code according to our usage of terminology in this paper.  We will later discuss two examples of codes that do satisfy our stronger definition of strictly functional repair: one
from \cite{HollmannPoh} (Section \ref{sec:strictly}) and a new example that
arises almost immediately from phrasing the definition in terms of
projective geometry (Section \ref{sub:dualarcs}).

\subsection{Geometric interpretation of the cut-set bound}

In \cite{NetworkCoding}, the cut-set bound of network coding is used
to establish an upper bound on the the number of information symbols
$m$ that can be stored in an $(m; n, k,r,\alpha,\beta)$-functional
repair code.  Here we interpret this bound in terms of
finite projective geometry for the case $n=r+1$, $\beta=1$.

\begin{thm} \label{thm:cutset}
Let $(\Sigma={\rm PG}(m-1,q),{\cal A})$ be a $(m;r+1,k,r,\alpha,1)$-functional repair code.  Then 
\begin{align*}
m\leq \sum_{i=1}^k\min (\alpha, (r-k)+i).
\end{align*}
\end{thm}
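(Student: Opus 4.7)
The plan is to translate the classical min-cut/max-flow argument of Dimakis \textit{et al.} into the projective setting via iterated application of the repair property. I would start by fixing any repairable set $\{U_1,\ldots,U_{r+1}\}$ and simulating a sequence of $r+1$ repair steps: at step $j$, the current configuration is $\{V_1,\ldots,V_{j-1},U_j,U_{j+1},\ldots,U_{r+1}\}$ (repairable by induction on $j$, using the fact that after removing the failed node the remaining $(n-1)$-set lies in $\A$), and the repair property then produces a new $(\alpha-1)$-dimensional subspace $V_j$ satisfying $V_j\subseteq\langle p_1^{(j)},\ldots,p_{j-1}^{(j)},q_{j+1}^{(j)},\ldots,q_{r+1}^{(j)}\rangle$ for some points $p_i^{(j)}\in V_i$ and $q_i^{(j)}\in U_i$. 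This yields a final repairable set $T=\{V_1,\ldots,V_{r+1}\}$ in which every node carries an explicit $(r,1)$-repair description in terms of the previously-repaired $V_i$'s and the original $U_i$'s.

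Next, I would invoke the recovery property on $T$ to obtain a $k$-subset $\{V_{i_1},\ldots,V_{i_k}\}$ with $i_1<\cdots<i_k$ whose span is $\Sigma$, so that $\dim\langle V_{i_1},\ldots,V_{i_k}\rangle=m$. The dimension bound then comes from the filtration $W_\ell=\langle V_{i_1},\ldots,V_{i_\ell}\rangle$. When $V_{i_\ell}$ was created at step $i_\ell$, its $r$ helper points include $p_{i_j}^{(i_\ell)}\in V_{i_j}$ for each $j<\ell$, since $i_j<i_\ell$ means $V_{i_j}$ is among the helpers $\{V_1,\ldots,V_{i_\ell-1}\}$ available at that step. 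These $\ell-1$ points lie in $W_{\ell-1}$, so modulo $W_{\ell-1}$ the subspace $V_{i_\ell}$ is contained in the span of the remaining at most $(i_\ell-\ell)+(r+1-i_\ell)=r-\ell+1$ helper points. Hence $\dim W_\ell-\dim W_{\ell-1}\le\min(\alpha,r-\ell+1)$, and telescoping with the reindexing $i=k-\ell+1$ gives $m\le\sum_{\ell=1}^k\min(\alpha,r-\ell+1)=\sum_{i=1}^k\min(\alpha,(r-k)+i)$, as required.

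The main obstacle I anticipate is conceptual rather than technical: the recovery property only guarantees that \emph{some} $k$-subset of $T$ spans $\Sigma$, not a prescribed one matched to the simulated history. This is why the plan iterates through all $r+1$ nodes rather than stopping after $k$ repair steps — doing so ensures that every node of the final set comes with a concrete repair description, so the filtration argument applies uniformly to whichever $k$-subset the recovery property happens to distinguish, and the natural ordering of that subset by simulated creation time automatically supplies the min-cut-style structure (earlier nodes have fewer prior recovery-set helpers, hence a larger "outside" contribution bounded by $r-\ell+1$).
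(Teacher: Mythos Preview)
Your argument is correct and follows the same min-cut philosophy as the paper's proof: simulate a sequence of repairs and bound the dimension using the fact that each repaired node lies in the span of $r$ single points, some of which are already absorbed by previously constructed spaces. The telescoping count you give, yielding $\dim W_\ell-\dim W_{\ell-1}\le\min(\alpha,r-\ell+1)$, is exactly the dimension bookkeeping the paper performs.

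There is one genuine methodological difference worth noting. The paper fixes a spanning $k$-subset at the outset, repairs only those $k$ nodes in order, and at each stage asserts that the current first $k$ nodes (with replacements) still span $\Sigma$; under the paper's own Definition~\ref{defn:functional}, where recovery guarantees only that \emph{some} $k$-subset spans, this step is tacitly assuming the stronger ``any $k$-subset'' recovery. Your approach sidesteps this: by carrying the simulation through all $r+1$ nodes before invoking recovery, you ensure that every node in the final set $T$ has an explicit $(r,1)$-repair description, so the filtration argument applies uniformly to whichever $k$-subset of $T$ the recovery property happens to distinguish. This is precisely what your final paragraph identifies as the obstacle, and your resolution is the right one. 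The cost is slight (you run $r+1$ repair steps rather than $k$), and what you gain is that the proof goes through verbatim under the weaker ``some $k$-subset'' recovery hypothesis actually stated in the theorem.

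One small point of phrasing: recovery in Definition~\ref{defn:functional} is stated for $(n-1)$-sets in $\A$, not for repairable $n$-sets. Since $T$ is repairable, any $T\setminus\{V_j\}$ lies in $\A$ and contains a spanning $k$-subset, which is then automatically a $k$-subset of $T$; you may want to make this explicit.
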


\begin{proof}
Each node $i$ corresponds to a subspace $U_i$ of $\Sigma$ of dimension
$\alpha-1$, and any $k$ of them span ${\rm PG}(m-1,q)$.  In
particular, the spaces corresponding to the first $k$ nodes span
$\Sigma$, i.e. $\langle U_1,U_2,\dotsc, U_k\rangle=\Sigma$.  This
implies that $m-1$ is at most $k\alpha-1$.

Consider a repair of node 1.  The repair property implies it is
possible to choose one point $P^1_j$ from each node $j$ with $2\leq j
\leq r+1$ such that there is an $(\alpha-1)$-dimensional subspace
$U_1'$ contained in their span with $\{U_1',U_2,\dotsc,U_{r+1}\}$ repairable.  Since we require $\langle U_1^\prime,U_2,\dotsc,U_k\rangle=\Sigma$, it follows that $\langle U_2,U_3,\dotsc,U_k,
P^1_{k+1},P^1_{k+2},\dotsc, P^1_{r+1}\rangle=\Sigma$.  This implies
that $m-1$ is at most $(k-1)\alpha-1+(r+1-k)$.

We now consider a repair of node 2.  There exists a point $P^2_j$ in
each node with $j\neq 2$ (including $P^2_1$ in $U_1'$) such that there
is a $(\alpha-1)$-dimensional subspace $U_2'$ contained in their span with $\{U_1',U_2',\dotsc,U_{r+1}\}$ repairable,
and $\langle U_3,\dotsc,U_k, P^1_{k+1},P^1_{k+2},\dotsc,
P^1_{r+1},P^2_1,P^2_{k+1},P^2_{k+2},\dotsc,P^2_{r+1}\rangle=\Sigma$.
This implies that $m-1$ is at most $(k-2) \alpha-1+(r+1-k)+(r+2-k)$.

We can repeat this process, continuing to replace each $U_i$ in the
set by a collection of repair points whose inclusion ensures that the
replacment $U_i'$ will be contained in the relevant span.  After
repair of node $i$ we have the result that $m-1$ is at most
$(k-i)\alpha-1+\sum_{j=1}^i(r+j-k)$.  The bound on $m-1$ is lowered at
each step until either we reach a point at which the number of
additional points we have to add ($r+i-k$) is greater than $\alpha$,
or we have replaced replaced all of $U_1,\dotsc, U_k$ with the
relevant repair sets of points. At this point we stop, and we have
\begin{align*}
m-1&\leq \left(\sum_{j=1}^k \min(\alpha,r+j-k)\right)-1,\intertext{so}
m&\leq \sum_{j=1}^k \min(\alpha,r+j-k),\\
&=\sum_{i=0}^{k-1}\min(\alpha,r-i).
\end{align*}

\end{proof}

Generalising to $\beta>1$ is entirely straightforward: in each step of
the proof we take $\beta$ points per node rather than $1$ point.  
This approach would also work in the $n>r+1$ case if we make the 
assumption that {\em any} set of $r$ nodes can be used for repair.  This is the assumption made in \cite{NetworkCoding, productmatrix}.

\subsection{Performance measures for FRCs}
In the definitions of Section~\ref{sec:general} there is no stipulation on the size of $\A$, nor
on the number of $(\alpha-1)$-dimensional subspaces in an $(m; n,
k,r,\alpha,\beta)$-functional repair code.  Let $\Num$ be the number
of distinct $(\alpha-1)$-dimensional subspaces used in $\A$.  We will
consider bounds on the value of $\Num$ in Section \ref{sec:digraphs}.

The commonly-studied measures of efficiency of an FRC are the {\em
  storage rate} $R_s = \frac{m}{n \alpha}$ (the number of message
symbols divided by the total number of stored symbols)and the {\em
  repair rate} $R_r=\frac{\alpha}{r \beta}$ (the number of symbols
required for the repaired node divided by the number of symbols
requested in order to facilitate repair).  The value $r \beta$ is
called the {\em repair bandwidth}. Another performance metric is
{\em locality} - the number of nodes to be contacted for repair, given
by $r$.

Other performance metrics that we will not describe formally include
{\em availability}, which is the number of disjoint repair
sets for a node.  Recent interest in
this includes \cite{SahraeiGastpar} where fractional repetition codes
are used to construct codes with high availability and nodes are
partitioned into clusters, each cluster providing a set of helper
nodes to repair a failed node, and \cite{SohnChoiMoon}, where 
codes with different repair bandwidth for repair within clusters and 
across clusters are proposed.

The ability to repair multiple failures is also
obviously of interest, and this may also be studied under different
models, for example, \cite{YeBarg, ZorguiWang} study centralised
repair (where repair is carried out in one location) and cooperative
repair (where failed nodes may communicate) for multiple failures.

Much existing literature seeks to construct codes that optimise one or
more of these measures
(\cite{NetworkCoding,productmatrix,local-avail}).  This is not the
primary motivation of this paper, although we will examine the
trade-offs that arise from the various possible construction choices
we discuss.  We will see that most geometrical constructions seem to
have good repair rates but less than ideal storage rates; some of them
offer a trade-off between repair rate and locality.

\subsection{The product-matrix model}\label{sub:motivation}
The other widely used model of $(m; n, k,r,\alpha,\beta)$-functional repair codes
is the product-matrix model \cite{productmatrix} mentioned in the Introduction.  In this model, the $m$ 
information symbols are formatted into an $r \times \alpha$ message matrix,
and the encoding process involves multiplication by an $n \times r$ encoding
matrix.  The resulting $n \times \alpha$ matrix gives the symbols stored on
each of the $n$ nodes: row $i$ of the matrix denotes the $\alpha$ symbols
stored in node $i$.  This can be viewed as an instantiation of the vector space model of \cite{HollmannPoh}: if the entries in the $i^{\rm th}$ row of the encoding matrix are $E_{i1},E_{i2},\dotsc E_{ir}$, then the $i^{\rm th}$ node corresponds to the subspace spanned by the vectors $\bf{v_0},\bf{v_1},\dotsc, \bf{v_{\alpha-1}}$, where $\mathbf{v_j}$ has the values $E_{i1},E_{i2},\dotsc, E_{ir}$ in positions $j r+1$ through $j r+r$ and $0$ in the remaining positions.  If a length $m$ message is obtained by concatenating the columns of the message matrix, then the resulting symbols stored by each node according to this vector space scheme are precisely those that would be stored using the product-matrix model.

\subsection{Subpacketisation/vectorisation}

We now consider a well-known example of an FRC that can be generated using the the product-matrix model with $\alpha=1$, together with the application of a technique proposed by Shanmugam \textit{et al.} for improving the repair bandwidth \cite{scalar}.  We will see that this example can be described very naturally in the projective geometry setting. 
 
\begin{eg}[Scalar MDS code]
A file $x_0 \ldots
x_{m-1}$ consisting of $m$ symbols belonging to the field $\F_{p^s}$,
$p$ a prime power and $s>1$, is stored across $n$ storage nodes using
an $[n,m]$-MDS code over $\F_{p^s}$. (This is referred to as a {\em scalar
MDS code.})  Each storage node would then store exactly $\alpha=1$
symbol of $\F_{p^s}$.  Now, if one storage node should fail, a repair
would involve contacting $r=m$ nodes, each contributing $\beta=1$
symbol.  Altogether it would take $r \beta = m$ symbols to repair one
symbol.

Following the approach of Definition \ref{defn:functional}, the scalar MDS code construction
translates to a collection of $n$ points $P_0, \ldots, P_{n-1}$ in
$\Sigma={\rm PG}(m-1,p^s)$, every $m$ of which span $\Sigma$;  this is precisely an $n$-arc in ${\rm PG}(m-1, p^s)$. Any failed node can only be obtained by a $(m, 1)$-repair, since any given point of the arc is not contained in the space spanned by $m-1$ further points of the arc.  This is an
$(m; n,m,m,1,1)$-functional repair code with storage rate $R_s = \frac{m}{n}$ and repair rate $R_r=\frac{1}{m}$.

\end{eg}

In \cite{scalar} Shanmugam \textit{et al.} proposed a
``vectorisation'' of MDS codes over fields of prime power in order to
obtain a better repair bandwidth.  ``Vectorisation'' or
``subpacketisation'' involves treating each symbol $x_i\in \F_{p^s}$ as $s$ symbols of
$\F_{p}$.  As a consequence, instead of having to downloading {\em all} the symbols
in each node, one may be able to effect repair by downloading fewer
symbols (from perhaps more nodes), resulting in a reduction of repair
bandwidth.

To explore the vectorisation process more explicitly, let $f(x) = a_0 + a_1x + \cdots + a_{s-1}x^{s-1} +
x^s$ be a primitive polynomial of degree $s$ over $\F_p$
and let $\zeta$ be a root of $f(x)$.  Then every element $b$ of
$\F_{p^s}$ can be written as $b = b_0 + b_1 \zeta + \cdots +
b_{s-1}\zeta^{s-1}$, $b_i \in \F_p$.  Using this correspondence, $b
\in \F_{p^s}$ can be viewed as $(b_0, b_1, \ldots, b_{s-1}) \in
\F_p^s$.  This is the basis of the technique of field reduction used
to construct Desarguesian spreads of ${\rm PG}(sm-1, p)$ from the points of
${\rm PG}(m-1, p^s)$ (\cite[Section 4]{Hirschfeld}).  A point 
$(x_0, x_1, \ldots, x_{m-1})$ in ${\rm PG}(m-1,p^s)$, with 
$x_i \in \F_{p^s}$ viewed as $(x_0^i, x_1^i, \ldots, x_{s-1}^i) \in 
\F_p^s$, can be written as the point 
$(x_0^0, x_1^0, \ldots, x_{s-1}^0, x_0^1, x_1^1, \ldots, x_{s-1}^1,
\ldots,x_0^{m-1}, x_1^{m-1}, \ldots, x_{s-1}^{m-1})$ 
in ${\rm PG}(sm-1, p)$.
Now, take a point $(p_0, p_1, \ldots, p_{m-1}) \in
{\rm PG}(m-1, p^s)$ and all its multiples $\{ (p_0\zeta^i, p_1 \zeta^i,
\ldots, p_{m-1} \zeta^i \; | \; i=0, \ldots, p^s-2 \}$.  Then the
corresponding points of this set in ${\rm PG}(sm-1, p)$ form an
$(s-1)$-dimensional subspace.  The set of all such $(s-1)$-dimensional
subspaces partitions ${\rm PG}(m-1, p^s)$ and is a Desarguesian spread.

(The ``vectorisation'' process in \cite{scalar} uses another map:
each $b \in \F_{p^s}$ can be treated as a linear transformation 
$x \mapsto bx$ in $\F_{p^s}$, so $b$ can be described as an $s \times s$
matrix acting on the basis of $\F_{p^s}$ over $\F_p$.  Each element of 
the MDS code is thus replaced by its corresponding $s \times s$ matrix.
This process is equivalent to the field reduction construction of 
Desarguesian spreads described above.) 

The ``vectorised'' functional repair code is now an
$(sm; n, m, r \le m, s, \beta)$-functional repair code for some $r$ and $\beta$
and storage rate $R_s = \frac{m}{n}$, repair rate $R_r=\frac{s}{r \beta}$.
It corresponds to a set of $n$
$(s-1)$-dimensional subspaces of ${\rm PG}(sm-1, p)$, and we can see that
with more room to manoeuvre we may be able to repair one subspace
without having to use entire subspaces.

We give a small example to illustrate this principle:

\begin{eg} \label{eg:vectorise}
Take $s=3$, $k=3$, $n=5$, we have a 5-arc in 
${\rm PG}(2,8)$ (taking primitive element $\zeta^3 = \zeta + 1$):
\begin{center}
$\left( \begin{array}{ccc}
1 & 0 & 0 \\
0 & 1 & 0 \\
0 & 0 & 1 \\
1 & 1 & 1 \\
1 & \zeta & \zeta^2 \end{array} \right).$
\end{center}

This is an $(m=3; n=5, k=3, r=3, \alpha = 1, \beta = 1)$-functional repair
code with $R_s = \frac{3}{5}$, $R_r= \frac{1}{3}$.  ``Vectorisation'' 
gives 5 planes in ${\rm PG}(8,2)$: each group of three rows are 3 points that span 
a plane.  We call the planes $U_1, \ldots, U_5$.

\begin{center}
$\left( \begin{array}{ccccccccccc}
1 & 0 & 0 && 0 & 0 & 0 && 0 & 0 & 0 \\
0 & 1 & 0 && 0 & 0 & 0 && 0 & 0 & 0 \\
0 & 0 & 1 && 0 & 0 & 0 && 0 & 0 & 0 \\ 
  &   &   &&   &   &   &&   &   &   \\
0 & 0 & 0 && 1 & 0 & 0 && 0 & 0 & 0 \\
0 & 0 & 0 && 0 & 1 & 0 && 0 & 0 & 0 \\
0 & 0 & 0 && 0 & 0 & 1 && 0 & 0 & 0 \\
  &   &   &&   &   &   &&   &   &   \\
0 & 0 & 0 && 0 & 0 & 0 && 1 & 0 & 0 \\
0 & 0 & 0 && 0 & 0 & 0 && 0 & 1 & 0 \\
0 & 0 & 0 && 0 & 0 & 0 && 0 & 0 & 1 \\
  &   &   &&   &   &   &&   &   &   \\
1 & 0 & 0 && 1 & 0 & 0 && 1 & 0 & 0 \\
0 & 1 & 0 && 0 & 1 & 0 && 0 & 1 & 0 \\
0 & 0 & 1 && 0 & 0 & 1 && 0 & 0 & 1 \\
  &   &   &&   &   &   &&   &   &   \\
1 & 0 & 0 && 0 & 0 & 1 && 0 & 1 & 0 \\
0 & 1 & 0 && 1 & 0 & 1 && 0 & 1 & 1 \\
0 & 0 & 1 && 0 & 1 & 0 && 1 & 0 & 1 \\
\end{array} \right).$
\end{center}

This is now an $(m=9; n=5, k=3, r=5, \alpha = 3, \beta = 2)$-functional
repair code.  If $U_1$ fails, one could repair $U_1$ by 
downloading the following points:
\begin{itemize}
\item $R_{21}=(000 \; 110 \; 000)$, $R_{22}=(000 \; 011 \; 000)$ from $U_2$,
\item $R_{31}=(000 \; 000 \; 110)$, $R_{32}=(000 \; 000 \; 011)$ from $U_3$,
\item $R_{41}=(110 \; 110 \; 110)$, $R_{42}=(011 \; 011 \; 011)$ from $U_4$,
\item $R_{51}=(010 \; 101 \; 011)$ from $U_5$ (and another one if we must have
symmetry).
\end{itemize}

Then we can get $(010 \; 000 \; 000) = R_{51}+R_{21}+R_{22}+R_{32}$,
$(110 \; 000 \; 000)= R_{41}+R_{21}+R_{31}$, and
$(011 \; 000 \; 000)= R_{42}+R_{22}+R_{32}$.  This gives us $U_1$.

In the scalar version, to repair one point (9 bits of information) we
need to use three points (27 bits).  The repair rate is therefore $1/3$.  
In the ``vectorised'' version, to repair one subspace (27 bits) we need 
to use 8 points (72 bits).  The repair rate is thus $3/8 > 1/3$. (Or $3/7$
if we don't mind lopsidedness.)  
\end{eg}

The motivation in \cite{scalar} is to obtain a better repair rate, which
the example illustrated.  
In addition, we see that this process has a natural
counterpart in projective geometry that is also intuitive.

Much work has been done further along these lines with some
variations.  For instance, \cite{Babu-Kumar} studies the lower bound
for $\alpha$ (the ``sub-packetisation'') in MSR codes that allow
``repair-by-transfer'', that is, symbols from the remaining
functioning nodes are downloaded directly without computation during
repair, and \cite{Vajha-Babu-Kumar} provides further examples of codes
reaching the lower bound for $\alpha$ for different values of locality
$d$ ($r$ in this paper).  Meanwhile, \cite{epsilonMSR} studies trading
off repair bandwidth for better sub-packetisation, and
\cite{SmallFields} also provides constructions for MSR codes achieving
the lower bound for $\alpha$ for ``repair-by-transfer''.

\section{Projective geometric constructions of functional repair codes}
\label{sec:constructions}

We will examine some existing constructions and also some
constructions that arise naturally from looking at functional repair
codes from a projective geometric point of view. The construction of a vector space/projective geometric functional repair code involves choosing both the dimensions of the spaces corresponding to the nodes, and selecting which subspaces of these dimensions to use.  The properties of the code are determined entirely by the manner in which the various spaces intersect.  The advantage of the geometric perspective is that many classical geometric objects have nice, well-understood properties in terms of how spaces embedded in these objects intersect.  We will see that many existing constructions in the literature can be viewed as arising from classical geometric objects in this way.

Broadly speaking, assigning low-dimension subspaces over a given field to nodes is efficient from a storage perspective, while assigning larger spaces over the same field can allow the repair bandwidth to be reduced.  When spaces of dimension greater than one are used, there is the potential for the spaces assigned to distinct nodes to have a non-trivial intersection.  In what follows we will consider separately constructions with intersecting subspaces and those with non-intersecting subspaces.  Both cases are potentially of interest: non-intersecting spaces are efficient in the sense of avoiding direct redundancy, however there is an upper bound to how large spaces can be without intersecting, and redundancy may be desirable for facilitating recovery and/or repair.

\subsection{Constructions using intersecting subspaces.}
\label{sub:intersect}
We begin by considering the simplest possible case for intersecting subspaces, that of lines in a plane, then use the results obtained to suggest useful constructions in higher dimensions.
\subsubsection{Dual arcs} \label{sub:dualarcs}

A neat construction of an exact repair code can be obtained from three lines in a plane:

\begin{eg}[Three lines in a plane.] \label{eg:lines}
Any three non-concurrent lines in a plane will give an exact repair code: let
$l_1$, $l_2$, $l_3$ be three non-concurrent lines in ${\rm PG}(2,q)$, and
let $\A$ be the collection of the sets of pairs of distinct lines 
$\{l_i, l_j\} \subseteq \{l_1, l_2, l_3\}$. Then $\A$ is an $(m=3; n=3,
k=2,r=2, \alpha=2, \beta = 1)$-functional repair code.

We may coordinatise $l_1$, $l_2$, $l_3$ as:
\begin{eqnarray*}
l_1 & : & \langle (1,0,0), (0,1,0) \rangle, \\
l_2 & : & \langle (0,1,0), (0,0,1) \rangle, \\
l_3 & : & \langle (1,0,0), (0,0,1) \rangle.
\end{eqnarray*}
A $(2,1)$-repair for $l_3$, for example,  is $\langle (1,0,0)\in l_1,
(0,0,1) \in l_2 \rangle$. 

Here the storage rate $R_s = 1/2$ and the repair rate is $R_r=1$. 
\end{eg}

This example tolerates a single node failure.  In order to protect against additional failures we may desire schemes permitting more nodes.  We can achieve this by generalising the idea of Example~\ref{eg:lines} to a larger set of lines: a {\em dual arc}
in a a projective plane of order $q$ is a set of $\Num \le q+1$ lines, 
no three concurrent. 


\begin{thm}\label{thm:lines}
Let $\LL$ be a collection of $\Num \ge 3$ lines of a projective plane
$\Sigma$ such that no three of them are
concurrent.  Let $\A$ be the collection of $(\Num-1)$-tuples of distinct lines of
$\LL$.  Then $(\Sigma, \A)$ is an $(m=3; n=\Num, k=2, r=2, \alpha = 2,
\beta = 1)$-functional repair code that can tolerate up to $\Num-2$ node failures if $\Num > 3$.
\end{thm}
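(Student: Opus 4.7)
The plan is to verify each of the three properties in Definition~\ref{defn:functional} directly for the pair $(\Sigma,\A)$, and then treat the tolerance claim separately. Throughout, the ambient space is $\Sigma={\rm PG}(2,q)$ (so $m=3$), nodes correspond to $1$-dimensional projective subspaces (lines, so $\alpha=2$), and $\beta=1$ means that in each $(r,\beta)$-repair we choose one point per helper node.

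Recovery \ref{pjfn:recovery} is immediate: any $\U\in\A$ is an $(\Num-1)$-subset of $\LL$ with $\Num-1\ge 2$, and two distinct lines of a projective plane meet in a single point and together span the whole plane, so any pair of lines of $\U$ is a recovery set, giving $k=2$.

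The substantive step is the repair property \ref{pjfn:repair}. Each $\U\in\A$ has the form $\LL\setminus\{l_m\}$ for some $l_m\in\LL$. The plan is to show that $l_m$ itself can be obtained by $(2,1)$-repair from $\U$. Choose any two distinct lines $l_i,l_j\in\U$ and let $P_i=l_m\cap l_i$ and $P_j=l_m\cap l_j$; as intersections of pairs of distinct lines in a projective plane, these are well-defined single points. The no-three-concurrent hypothesis on $\LL$ forces $P_i\neq P_j$, since a common point would place the three distinct lines $l_i,l_j,l_m$ concurrent at it. Hence $\langle P_i,P_j\rangle$ is the unique line through these two points of $l_m$, namely $l_m$ itself, so setting $U_n:=l_m$ satisfies Definition~\ref{defn:repair} with $r=2,\beta=1$. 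Because $\U\cup\{U_n\}=\LL$, for every $l\in\U$ the set $\U\cup\{U_n\}\setminus\{l\}=\LL\setminus\{l\}$ lies in $\A$ by construction. This verifies all the constraints and yields the parameter list $(3;\Num,2,2,2,1)$.

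For tolerance of up to $\Num-2$ failures when $\Num>3$, observe that the repair argument above never used the fact that $\U$ had size $\Num-1$: it required only two surviving lines and the no-three-concurrent property. So if only two nodes $l_a,l_b$ survive, they still span $\Sigma$ (so the message is recoverable), and any other failed line $l_c\in\LL$ can be rebuilt from the points $l_c\cap l_a$ and $l_c\cap l_b$, which are distinct for the same reason. Sequentially repairing failed nodes then rebuilds the full code. The only point in the whole argument that requires any care is precisely this appeal to no-three-concurrent to guarantee that the two repair points are distinct; everything else is elementary plane incidence.
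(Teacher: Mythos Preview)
Your argument is correct and follows essentially the same route as the paper: both show that the missing line can be \emph{exactly} repaired from any two surviving lines by taking their intersection points with it, with the no-three-concurrent hypothesis ensuring these two points are distinct. The paper compresses this by citing Example~\ref{eg:lines} (three non-concurrent lines form an exact repair code) and noting any triple from $\LL$ is of that form, whereas you unroll the verification of Definition~\ref{defn:functional} explicitly; two minor cosmetic points are that the definition has two properties, not three, and that the theorem allows $\Sigma$ to be any projective plane rather than ${\rm PG}(2,q)$, though your incidence argument uses nothing beyond the axioms and so goes through unchanged.
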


\begin{proof}
Any subset of three nodes in $\LL$ can be considered to be an exact repair code, as seen in Example~\ref{eg:lines}.  Thus, provided two nodes survive, any failed node can be recovered by exact repair.
\end{proof}

\begin{const}[Dual arcs in a plane]\label{cons:dualarcsinplane}
Let $\cal C$ be a nonsingular conic in ${\rm PG}(2,q)$ with $q$ an odd prime power.  Let $\LL$ be a subset of the tangents to $\cal C$ with $|\LL|=n$, $3 \le n \le q+1$.  Then $\LL$ is a dual arc in ${\rm PG}(2,q)$ (\cite{Hirschfeld}).  Let $\A$ be the collection of pairs of distinct lines of
$\LL$.  Then by Theorem \ref{thm:lines}, we have that $(\Sigma, \A)$ is an $(3; n, 2, 2, 2,
1)$-functional repair code that can tolerate up to $n-2$ node failures (if $n > 3$), with storage rate $R_s = 3/2n \le 1/2$ and repair rate $1$.
\end{const}

This construction leads naturally
to a generalisation to higher dimensional spaces:

\begin{eg}[Planes in ${\rm PG}(3,q)$.] \label{eg:planes}
Consider a dual arc in ${\rm PG}(3,q)$: a set of $q+1$ planes,
any 4 meeting trivially.  (So 2 planes meet in a line, 3 planes
meet in a point.)

Take 3 of the planes $\pi_1$, $\pi_2$, $\pi_3$.
If $\pi_3$ fails, repair to $\pi'_3$ using lines $l_i \in \pi_i$,
$i=1,2$. This gives a $(m=4; 3 \le n \le q+1, k=2, r=2, \alpha = 3,
\beta = 2)$-functional repair code.

For example, 
\begin{eqnarray*}
\pi_1 & : & x_0=0, \\
\pi_2 & : & x_1=0, \\
\pi_3 & : & x_2=0. 
\end{eqnarray*}
If $\pi_3$ fails, for example, it can be repaired by $(2,2)$-repair
using lines $l_1 = \{ (0, x_1, 0, x_3)\;|\; x_1, x_3 \in \F_q, 
\mbox{ not both zero.} \} \in \pi_1$ and 
$l_2 =\{(x_0,0,0,x_3)\;|\; x_0, x_3 \in \F_q, 
\mbox{ not both zero.}\} \in \pi_2$.  This gives an 
$(m=4; n=3, k=2, r=2, \alpha = 3, \beta = 2)$-functional repair code,
with  $R_s = 4/3n = 4/9$, $R_r=3/4$.  

On the other hand, we could take 4 planes 
\begin{eqnarray*}
\pi_0 & : & x_0=0, \\
\pi_1 & : & x_1=0, \\
\pi_2 & : & x_2=0, \\
\pi_3 & : & x_3=0.
\end{eqnarray*}
If $\pi_3$ fails, it can be repaired by $(4,1)$-repair, using 
$P_0 = (0,1,0,0) \in \pi_0$, $P_1 = (0,0,1,0) \in \pi_1$, and  
$P_2 = (1,0,0,0) \in \pi_2$.
This gives an 
$(m=4; n=4, k=2, r=3, \alpha = 3, \beta = 1)$-functional repair code, with 
$R_s=4/3n=1/3$ and better repair rate, $R_r=1$.  
\end{eg}

There are two important features in the simple construction of Example
\ref{eg:planes}: the ability
to trade off locality and repair bandwidth without having to make a 
decision during the set up, and the ability to repair multiple failures.
Before we discuss this in more detail, we give the general construction:

\begin{const} \label{cons:dualarc}  
Take a dual arc in ${\rm PG}(m-1, q)$: a set of $q+1$
hyperplanes, any $m$ of which meet trivially.  We may take the set of hyperplanes in a dual normal rational curve $\{ H_t = [1, t, t^2, \ldots, t^{m-1}] \; : \; 
t \in \F_q\} \cup \{H_{\infty} = [0,0, \ldots, 0, 1]\}$, where
$[z_0, z_1, \ldots, z_{m-1}]$ denotes the set of points 
$\{(x_0, x_1, \ldots, x_{m-1}) \; | \; z_0 x_0 + z_1 x_1 + \cdots + z_{m-1}x_{m-1}=0\}$. 

However, to make the description of the trade-off clearer, we will take an 
$m$-subset of these hyperplanes and coordinatise them as follows, writing $e_i$ to denote the point with a 1 in position $i$ and 0
everywhere else:

$$H_i :  x_i=0, \mbox{ that is, } 
H_i=\langle e_j, \;|\; j \in \{0, \ldots, m-1 \}\setminus \{ i \} \rangle. $$

This gives an $(m; n=m, k=2, r=\lceil \frac{m-1}{\beta}
\rceil, \alpha = m-1, \beta)$-functional repair code
with $R_s=\frac{m}{n\alpha}$ and $R_r=\frac{m-1}{m-1+\delta}$, 
where $\delta = 0$ if $\beta|m-1$.  Otherwise $\delta = \beta - \Delta$
where $\Delta = m-1 \bmod{\beta}$.  Here $\beta \ge 1$
and $r \ge 2$.  Indeed, if we choose $m$ odd, and $\beta = (m-1)/2$,
then we achieve both minimum locality and optimum repair bandwidth.

For simplicity we describe what happens if $H_0$ fails.  An $(r,
\beta)$-repair can be performed, with $r = \lceil \frac{m-1}{\beta}
\rceil$, with each of the active $H_i$ contributing $\beta$ points as
follows:

\begin{eqnarray*}
H_1 & \rightarrow & e_2, \ldots, e_{\beta+1}, \\
H_2 & \rightarrow & e_{\beta+2}, \ldots, e_{2\beta+1}, \\
 & \vdots & \\
H_i  & \rightarrow & e_{(i-1)\beta+2}, \ldots, e_{i\beta+1}, \\
 & \vdots & \\
H_{\lceil \frac{m-1}{\beta}\rceil}  & \rightarrow & 
e_{(\lceil \frac{m-1}{\beta}\rceil-1) \beta + 2}, \ldots, e_{m-1}, e_1.  
\end{eqnarray*}

\end{const}

Clearly at any repair one could choose the locality $r$ to suit the
circumstances.  In \cite{subspacecode} a construction was given that
also allows such a trade-off - one can choose between minimum
bandwidth repair or low locality repair, by assigning the subspaces
accordingly, but this assignment has to be determined at set up.  Construction \ref{cons:dualarc} allows the trade-off to be performed at each repair
according to the network conditions.

Construction \ref{cons:dualarc} also tolerates multiple node failures:  we can
choose $ m \le n \le q+1$, and any failure of up to $n-2$ nodes still
allows recovery and repair.  It also gives high availability.  For
example, when we consider the special case of Construction \ref{cons:dualarcsinplane} using dual arcs in planes, we see that any
line can be repaired using any pair of lines,  so that 
many sets of nodes can be used to repair a failed node.  

We note also that if we start with $n < q+1$, additional nodes can be created by
accessing information from some existing nodes using the repair
process.  This may be useful if resilience requirements change during
the lifetime of the storage system.

\subsubsection{Concurrent lines and strictly functional repair}
The use of dual arcs in constructing functional repair codes is appealing due to the high availability that results.  However
Example \ref{eg:lines} also prompts another question:  what happens if we 
allow sets of nodes that correspond to concurrent lines?  In Constructions \ref{cons:dualarcsinplane} and \ref{cons:dualarc}, the spaces assigned to nodes correspond to hyperplanes in the underlying space.  This means their pairwise intersections are well understood: any two hyperplanes of ${\rm PG}(m-1,q)$ intersect in a space of dimension $m-3$.  The use of the dual arcs enables us to control the way in which the spaces corresponding to larger sets of nodes intersect: any $t$ of them intersect in a space of dimension $m-1-t$.  However, we may wish to consider constructions where more general patterns of intersection are allowed (for example, in order to permit more than $q+1$ nodes).

In order to explore what happens when more general patterns of intersection occur, we return to the case of lines in the plane, and consider collections of lines that include sets of three concurrent lines.  The following example shows this takes us into the realm of
strictly functional repair codes:

\begin{eg}[A strictly functional repair code.]\label{eg:concurrent}
Let $l_1, l_2, l_3, l_4$ be four lines of $\Sigma={\rm PG}(2,q)$, $q> 3$,
such that $l_1, l_2, l_3$ are concurrent at a point $P$, and $l_4$
does not pass through $P$. (See Figure
\ref{fig:concurrent}.)  Let $\A$ be the collection of
pairs of lines $\{l_i, l_j\}$, $i, j \in \{1,2,3,4\}$, $i \neq j$.
Then $(\Sigma, \A)$ is am $(m=3; n=3, k=2, r=2, \alpha=2,
\beta=1)$-functional repair code which is strictly functional repair
code.

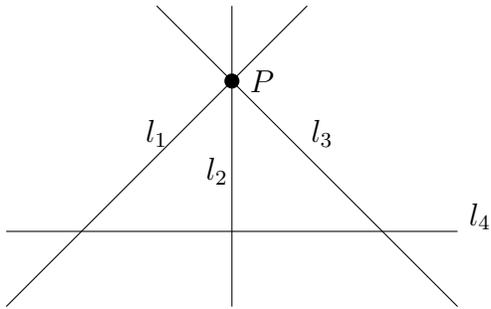
\begin{figure} 
\caption{A strictly functional repair code in ${\rm PG}(2,q)$.}\label{fig:concurrent}
\begin{center}
\begin{tikzpicture}[scale=1]
\draw  (0,0) -- (4,4); 
\draw  (3,0) -- (3,4); 
\draw  (6,0) -- (2,4); 
\draw  (0,1) -- (6,1); 
\fill[black] (3,3) circle (0.1cm); 
\node at (3.4,3) {$P$};
\node at (2,2.3) {$l_1$};
\node at (2.8,1.8) {$l_2$};
\node at (4.2,2.3) {$l_3$};
\node at (6.3,1.2) {$l_4$};
\end{tikzpicture}
\end{center}
\end{figure}

This is because there is a set $\{l_1, l_2, l_3\}$ with $\{l_1, l_2\}$,
$\{l_1, l_3\}$, $\{l_2, l_3\} \in \A$ but $l_3$ cannot be obtained from
$\{l_1, l_2\}$ by $(2,1)$-repair.
\end{eg}

As far as we are aware, this appears to be the only other example of a
strictly functional repair code in the literature, apart from an example
 due to \cite{HollmannPoh} that we will discuss in Section \ref{sub:focal}.
 
\subsubsection{Grassmann varieties}
\label{sub:grassmann}

The constructions we discussed in Section~\ref{sub:dualarcs} all involve subspaces that are hyperplanes of the ambient space.  This represents one extreme point of the possible trade-off between low repair bandwidth and flexibility of repair at the cost of high storage.  Using smaller dimensional spaces both reduces the storage overhead, and allows for greater flexibility in terms of the size of pairwise intersections between the spaces.  In this environment where greater flexibility is possible, this implies that the spaces must be chosen carefully to achieve the desired intersection properties.  Here we consider an example of a construction from \cite{subspacecode}.  It uses subspace codes constructed from Grassmann varieties in vector spaces.  We will describe it from the point
of view of projective geometry, in order to see how known properties of Grassman varieties make it possible to choose collections of subsets with suitable intersections.

Let $b \ge 2$ and $t \le b$ be integers.  Consider $\Pi_t$, a
$t$-dimensional projective subspace of ${\rm PG}(b, q)$.  Let the points
$X_0, \ldots, X_t$ be a basis for $\Pi_t$.  Write 
$X_i = (x_0^i, x_1^i, \ldots, x_b^i)$ and let $M_{\Pi_t}$ be the $(t+1)
\times (b+1)$ matrix

$$ M_{\Pi_t} = 
\left(\begin{array}{c} X_0 \\ X_1 \\ \vdots \\ X_t \end{array} \right)
= \left( \begin{array}{cccc} 
 x_0^0 & x_1^0 & \ldots & x_b^0 \\
x_0^1 & x_1^1 & \ldots & x_b^1 \\
\vdots & \vdots &     & \vdots \\
x_0^t &  x_1^t & \ldots &  x_b^t \end{array}\right).$$

Write $M_{\Pi_t}(i_0, \ldots, i_t)$ to denote the $(t+1) \times (t+1)$
submatrix of $M_{\Pi_t}$ consisting of columns $i_0, \ldots, i_t$.  
Let $\cal{V}$ be the set of ${b+1 \choose t+1}$ subsets $\{i_0,
\ldots, i_t\}$ of $\{0, 1, \ldots, b\}$, ordered in some way.  
Let $\phi(M_{\Pi_t}(i_0, \ldots, i_t))$ be defined as
$\det(M_{\Pi_t}(i_0, \ldots, i_t))$.  Then
$\phi(M_{\Pi_t})$ is defined as a point in ${\rm PG}(B, q)$, where 
$B = {b+1 \choose t+1}-1$, and the $j$th position of
$\phi(M_{\Pi_t})$ is $\phi(M_{\Pi_t}(i_0, \ldots, i_t))$ with 
$\{i_0, \ldots, i_t\}$ in the given order in $\cal{V}$.

For example, take $t=1$, $b=3$. Suppose $\Pi_1$ is a line in ${\rm PG}(3,q)$
with basis points $(x_0, x_1, x_2, x_3)$, $(y_0, y_1, y_2, y_3)$, and
$$ M_{\Pi_1} = \left( \begin{array}{cccc} 
x_0 & x_1 & x_2 &  x_3 \\
y_0 &  y_1 & y_2 & y_3 \\
\end{array} \right).$$
Then $\phi(M_{\Pi_1})$ is a point in ${\rm PG}(5,q)$ given by 
$$( x_0y_1 - x_1y_0, x_0y_2 - x_2 y_0, x_0y_3 - x_3y_0, x_1y_2 - x_2y_1,
x_1y_3 - x_3y_1, x_2y_3 - x_3 y_2).$$

We call these Grassmann coordinates (or Pl\"{u}cker coordinates, when
$t=1$).  The set of points in ${\rm PG}(B,q)$ corresponding to all the
$t$-dimensional subspaces of ${\rm PG}(b, q)$ is called the
Grassmannian, or the Grassmann variety of the $t$-spaces of
${\rm PG}(b,q)$.  We will concentrate on the case $t=1$ here and refer the
reader to \cite[Chapter 24]{HirschfeldThas} for more details and for
the general case.

For $t=1$, the lines of ${\rm PG}(b, q)$ are mapped to points of ${\rm PG}(B,q)$,
$B = {b+1 \choose 2}-1$.  The $q^2+q+1$ lines lying on a plane in
${\rm PG}(b,q)$ are mapped to a plane in ${\rm PG}(B, q)$ - the collection of such
planes in ${\rm PG}(B, q)$ are called the Greek spaces.  The $q^{b-1} +
q^{b-2} + \cdots + b + 1$ lines through a point in ${\rm PG}(b,q)$ are
mapped to a $(b-1)$-dimensional subspace in ${\rm PG}(B,q)$ - the collection
of such subspaces are called the Latin spaces.  Two Latin (Greek)
spaces meet in at most one point, and a Latin and a Greek space meet
in either a line or the empty set.  If there are three distinct Latin
(Greek) spaces $\pi$, $\pi'$, $\pi''$ such that their pairwise
intersections are distinct points, then any other Latin (Greek) space
$\bar{\pi}$ having distinct points in common with $\pi$ and $\pi'$
will also has a point in common with $\pi''$.  These properties allow
the construction of the functional repair codes described in
\cite{subspacecode}.

\begin{const}[Grassman variety construction \cite{subspacecode}]

The storage nodes $V_0, \ldots, V_{n-1}$ are associated with points
$P_0, \ldots, P_{n-1}$ in ${\rm PG}(b, q)$.   Each point $P_i$ can be 
associated with a collection of lines through that point, which, 
in turn, gives a $(b-1)$-dimensional subspace $M_i$ in ${\rm PG}(B, q)$.
The recovery and repair properties then depend on how the points $P_i$
are chosen:  every $b$ of the $M_i$ should span ${\rm PG}(B, q)$, and
if an $M_i$ should fail, one should be able to obtain it by some 
$(r, \beta)$-repair.  In \cite{subspacecode}, it is shown that
this can be a $(b,1)$-repair or a $(c, b)$-repair for any $c | b$.
This gives an $(m=B+1; n, k=b, r=b, \alpha=b, \beta=1)$-functional repair
code (or an $(m=B+1; n, k=b, r=c, \alpha=b, \beta=b)$-functional repair code
for any $c | b$), 
where $B = {b+1 \choose t+1}-1$, $t \le b$. 

Consider again the example with $t=1$, $b=3$.  One could take
 $n\ge 4$ points in ${\rm PG}(3,q)$ such that no 4 points lie in a plane
(an $n$-arc).  The corresponding Grassmannian would then consist
of $n$ planes in ${\rm PG}(5,q)$ with the property that every pair of 
planes meet in a point, and for any plane, the points of intersection 
with the other $n-1$ planes form an $(n-1)$-arc on the plane.  It is then
clear that any three planes would span ${\rm PG}(5,q)$, while any plane
can be obtained by $(3,1)$-repair.  This gives a repair rate of 1, and a storage rate of $\frac{2}{n} \le \frac{1}{2}$.

\end{const}

\subsubsection{Segre varieties}
\label{sub:segre}
Another class of varieties having subspaces with specific intersection properties are the Segre varieties.  These
can also be used to construct functional repair codes with
intersecting subspaces.  It gives storage rate $R_s= \frac{1}{2}$,
and has some restrictive recovery properties, but may still be
of some interest.

A Segre variety $\SV_{s,t}$ in ${\rm PG}((s+1)(t+1)-1, q)$ is defined as 
follows:

Let $S_t$ be a $t$-dimensional projective space  ${\rm PG}(t,q)$
and  $S_s$ be an $s$-dimensional projective space  ${\rm PG}(s,q)$.
Then 
\begin{eqnarray*}
\SV_{s,t} &=& \{(y_0z_0, y_0z_1, \ldots, y_0z_s;
y_1z_0, y_1z_1, \ldots, y_1z_s; \ldots; y_tz_0, y_tz_1, \ldots, y_tz_s) 
\;| \; \\
& &(y_0, y_1, \ldots, y_t) \in S_t, \; (z_0, z_1, \ldots, z_s) \in S_s
 \}.
\end{eqnarray*}

$\SV_{s,t}$ consists of two opposite systems of subspaces $\Sigma_1$,
$\Sigma_2$: $\Sigma_1$ consists of $q^s+q^{s-1}+\cdots+q+1$ mutually
skew $t$-dimensional subspaces, and $\Sigma_2$ consists of
$q^t+q^{t-1}+\cdots+q+1$ mutually skew $s$-dimensional subspaces.
Each subspace in $\Sigma_1$ meets a subspace in $\Sigma_2$ in exactly
one point.

\begin{eg} \label{eg:segre}
Suppose $s=t=1$.  Then $\SV_{1,1}$ is a hyperbolic
quadric in ${\rm PG}(3,q)$ which consists of $(q+1)^2$ points lying on
$2(q+1)$ lines.  These lines form the two opposite systems of
subspaces, each consisting of $q+1$ mutually skew lines.  If we take
two lines from each system, then if one line fails it can always be
repaired by $(2,1)$-repair from the two lines from the opposite
system.  For recovery, however, we must have $k=2$ lines from the same
system.  The collection of 3-subsets of these 4 lines gives an $(m=4;
n=4, k=2, r=2, \alpha = 2, \beta = 1)$- functional repair code (with
the possibility of adding more nodes by the repair process), with $R_s
= \frac{1}{2}$ and $R_r = 1$.

This example illustrates the importance of the assumption of arbitrary
recovery and repair sets in the cut-set bound: Theorem
\ref{thm:cutset} says that $m \le 3$ for $(k, r \alpha, \beta) =
(2,2,2,1)$.  Here we achieve $m=4$, but the pairs of lines
that constitute a recovery set are more restrictive.
\end{eg}

This can be generalised to $\SV_{t,t}$, $t \ge 1$: take $t+1$
$t$-dimensional subspace from $\Sigma_1$, and $t+1$ $t$-dimensional
subspaces from $\Sigma_2$.  Any one subspace may be obtained by $(t+1,
1)$-repair from the $t+1$ subspaces in the opposite system.  For
recovery, as before, we must have $k=t+1$ subspaces from the same
system.  The collection of $(2t+1)$-subsets of these $2t+2$  subspaces
 gives an $(m=(t+1)^2; n=2(t+1), k=t+1, r=t+1, \alpha =
t+1, \beta = 1)$- functional repair code (again, with the possibility
of adding more nodes by the repair process), with $R_s = \frac{1}{2}$
and $R_r = 1$.

\subsection{Constructions using non-intersecting subspaces.}
\label{sub:non-intersect}

\subsubsection{Spreads and partial spreads}
\label{sub:spreads}

Another natural object to look at when one considers projective space
constructions is spreads and partial spreads.

In \cite[Example 2.1]{HollmannPoh} an $(m=4; n=4, k=2, r=3, \alpha=2,
\beta = 1)$-functional repair code is constructed using four mutually
skew lines in ${\rm PG}(3,2)$.  Here we show that the construction works
over $\F_q$ for any $q \ge 2$.  We describe this construction as
elements from a spread in ${\rm PG}(3,q)$, $q \ge 2$.

\begin{thm}
\label{thm:spread}
Let ${\cal{S}}$ be a regular spread in ${\rm PG}(3,q)$.  Let $l_1$, $l_2$, $l_3$ be
three lines of ${\cal{S}}$ and let ${\cal{R}}$ be the (unique) regulus 
containing them.  Let $l_4 \in {\cal{S}} \setminus {\cal{R}}$.  Then $l_{i_4}$
can be obtained from $l_{i_1}$, $l_{i_2}$, $l_{i_3}$ by $(3,1)$-repair, 
$\{i_1, i_2, i_3, i_4\} = \{1,2,3,4\}$.
\end{thm}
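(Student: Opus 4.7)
The plan is to construct, for each choice of $i_4$, a plane $\pi$ through $l_{i_4}$ whose three intersection points $P_j = \pi \cap l_{i_j}$ (for $j = 1,2,3$) are non-collinear. Since the four lines are pairwise skew (being distinct elements of a spread), $\pi \cap l_{i_j}$ is automatically a single point, and as soon as the three points span $\pi$ we get $l_{i_4} \subset \pi = \langle P_1, P_2, P_3 \rangle$, which is precisely a $(3,1)$-repair. The argument splits into two cases: repairing $l_4$ from the three regulus lines $l_1, l_2, l_3$, and repairing a regulus line (say $l_1$, by symmetry) from $l_2, l_3, l_4$.

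For the first case, let $Q$ be the hyperbolic quadric whose two rulings are ${\cal R}$ and its opposite regulus ${\cal R}'$. The key observation, coming from the regularity of ${\cal S}$, is that $l_4 \cap Q = \emptyset$: since ${\cal R} \subseteq {\cal S}$, the remaining $|{\cal S}| - |{\cal R}| = q^2 - q$ lines of the spread partition the $q^3 - q$ points of ${\rm PG}(3,q)$ not lying on $Q$, so any non-${\cal R}$ spread line (in particular $l_4$) is disjoint from $Q$. Now for an arbitrary plane $\pi \supset l_4$, if the three points $P_1, P_2, P_3$ were collinear on a line $t$, then $t$ would be a transversal to $\{l_1, l_2, l_3\}$ and hence a line of ${\cal R}' \subset Q$; but $t$ and $l_4$ would then be coplanar in $\pi$, so they would meet at a point of $Q$, contradicting $l_4 \cap Q = \emptyset$. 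Hence every plane through $l_4$ yields a valid repair.

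For the second case, with $i_4 = 1$, let ${\cal R}_{2,3,4}$ denote the unique regulus through $l_2, l_3, l_4$ and ${\cal R}'_{2,3,4}$ its opposite regulus, sitting on a hyperbolic quadric $Q'$. A plane $\pi \supset l_1$ is ``bad'' exactly when the three points $\pi \cap l_j$ ($j = 2,3,4$) are collinear, and such collinearity forces the line through them to be a transversal to $\{l_2, l_3, l_4\}$, hence a line of ${\cal R}'_{2,3,4}$ meeting $l_1$. Thus the bad planes through $l_1$ correspond to points of $l_1 \cap Q'$. Neither $l_1 \in {\cal R}_{2,3,4}$ (which would force ${\cal R}_{2,3,4} = {\cal R}$ and hence $l_4 \in {\cal R}$) nor $l_1 \in {\cal R}'_{2,3,4}$ (which would force $l_1$ to meet $l_2$, violating their skewness in the spread) is possible, so $l_1 \not\subset Q'$ and $|l_1 \cap Q'| \le 2$. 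Since there are $q + 1 \ge 3$ planes through $l_1$, at least one is good.

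The hard step is the second case: unlike the first, there is no single quadric that the repair target $l_{i_4}$ automatically avoids. Instead one counts bad planes through $l_{i_4}$ via the quadric of the three helper lines and appeals to the elementary bound $|l_1 \cap Q'| \le 2$ together with $q \ge 2$ to guarantee that at least one good plane remains.
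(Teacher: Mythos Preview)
Your proof is correct, but it takes a different route from the paper and does more work than necessary.

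The paper's approach is constructive: it first observes that the four lines are completely symmetric in the hypotheses (any three of them lie in a regulus of $\mathcal{S}$ not containing the fourth, since the spread is regular), so it suffices to repair $l_4$ from $l_1,l_2,l_3$. It then explicitly builds a suitable plane by choosing a point $Q_1\in l_4$, taking the transversal $l_5$ to $l_2,l_3$ through $Q_1$, and the transversal $l_6$ to $l_1,l_3,l_4$ through $P_3=l_5\cap l_3$; the two transversals force a plane containing $l_4$ and one point from each of $l_1,l_2,l_3$.

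Your argument is quadric-based rather than transversal-based, and in Case~1 it actually proves something stronger than the paper does: because $l_4$ is a spread line outside $\mathcal{R}\subseteq\mathcal{S}$, it is disjoint from the hyperbolic quadric $Q$, and hence \emph{every} plane through $l_4$ yields a valid $(3,1)$-repair. This is neat. However, your Case~2 is redundant. By regularity, $\mathcal{R}_{2,3,4}\subseteq\mathcal{S}$ as well, and $l_1\in\mathcal{S}\setminus\mathcal{R}_{2,3,4}$, so the Case~1 argument applies verbatim with the roles permuted: $l_1$ is in fact disjoint from $Q'$, so $|l_1\cap Q'|=0$, not merely $\le 2$, and every plane through $l_1$ works. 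In other words, the symmetry reduction the paper invokes, combined with your Case~1, already gives the full result; your Case~2 counting argument is correct but unnecessary, and its bound is not sharp.
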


\begin{proof}
It is clear that any three of $l_1, \ldots, l_4$ are contained in a regulus that
does not contained the fourth line, so without loss of generality it suffices to prove that $l_4$
can be obtained from $l_1$, $l_2$, $l_3$ by $(3,1)$-repair.  

Let $Q_1$ be any point on $l_4$.  Let $l_5$ be the transversal through
$Q_1$ to $l_2$, $l_3$ - this line exists and is unique.  Let $P_2 = l_5 \cap l_2$
and $P_3=l_5 \cap l_3$.

Now consider $\{l_1, l_3, l_4\}$.  There is a unique regulus containing them
but not $l_2$.  Let $l_6$ be the transversal to them through $P_3$.  Let
$P_1 = l_6 \cap l_1$ and $Q_2 = l_6 \cap l_4$.  (We know that $Q_1 \neq Q_2$
since otherwise $l_5=l_6$ and $l_6$ meets all four lines, which means all four lines
are in a regulus.)

Now consider the space spanned by $P_1$, $P_2$, $Q_1$, $Q_2$,  $\pi = 
\langle P_1, P_2, Q_1, Q_2 \rangle$.  Since $P_2 Q_1 \cap P_1 Q_2 = P_3$,
$\pi$ is a plane.  So $P_1P_2$ and $l_4$ are both lines in $\pi$ and therefore
$P_1P_2$ meets $l_4$ in a point $Q_3$.  Hence $l_4 \subseteq \langle
P_1 \in l_1, P_2 \in l_2, P_3 \in l_3 \rangle$ and thus is obtained from 
$l_1$, $l_2$, $l_3$ by $(3,1)$-repair.
\end{proof}

\begin{const}{\cite[Example 2.1]{HollmannPoh}}
The collection of pairs of distinct lines from $\{l_1, l_2, l_3, l_4\}$
forms an $(m=4; n=4, k=2, r=3, \alpha=2,\beta = 1)$-functional repair code
which has $R_s = \frac{1}{2}$ and $R_r=\frac{2}{3}$.

For example, we may choose $l_1$, $l_2$, $l_3$ to be 

\begin{eqnarray*}
l_1 & = & \langle (1,0,0,0), (0, 0, 1,1) \rangle, \\
l_2 & = & \langle (0,1,0,0), (1, 0, 0,1)\rangle, \\
l_3 & = & \langle (0,0,1,0), (1, 1, 0,0)\rangle.
\end{eqnarray*}

These are lines on the quadric/regulus 
$$x_0x_2 - x_0x_3 - x_1x_2 - x_2x_3 + x_3^2 = 0.$$

(The other lines of the regulus are $\langle (1,0,y, 1), (1, y, 0, 0) \rangle$.)

We can take $l_4$ to be $\langle (0,0,0,1), (0,1,1,0) \rangle$, which
does not belong to this regulus.
\end{const}

A natural generalisation of such a construction would be to take
planes in spreads in ${\rm PG}(5,q)$.  Indeed, in Section
\ref{sub:motivation} a construction is given using elements of an
$(s-1)$-spread in ${\rm PG}(sm-1,q)$.  In \cite{HouLi, HouLiShum, NamSong,
  OggierDatta}, regular $t$-spreads in ${\rm PG}(m-1,q)$ are used to give
$(m; k \le n \le \frac{2^m-1}{2^{t+1}-1}, k = \frac{m}{\alpha}, r=2,
\alpha = t+1, \beta = \alpha)$-functional repair codes.  These
functional repair codes have the additional property of allowing
repairs of multiple node failures simultaneously.  For example, in
\cite{OggierDatta}, up to $\frac{n-1}{2}$ failed nodes can be repaired
simultaneously.  This follows from the property of regular spreads,
where one can always choose two spread elements that span a subspace that
contains a third given element.

These elements are subsets of a system of subspaces in a Segre
varieties.  Hence it is also natural to consider the generalisation to
subspaces on a Segre varieties.  In contrast to the constructions in
Section \ref{sub:segre} where elements are taken from both systems of
subspaces of a Segre variety, here we only take subspaces from one
system of subspaces, and these are mutually skew. 
Consider again an $\SV_{s,t}$ as described in Section \ref{sub:segre}.
For every point in $S_t$, there is a corresponding $s$-dimensional
subspace belonging to $\Sigma_2$ in $\SV_{s,t}$.  Take a
$t'$-dimensional subspace $V'$ of ${\rm PG}(t,q)$, $t'\le t$, and consider
$\Sigma'$, the $s$-dimensional subspaces contained in $\SV_{s,t}$
corresponding to the points of $V'$. Then, any subspace $W$ in
$\Sigma'$ can be obtained by $(2, s+1)$-repair from two other
subspaces in $\Sigma'$: suppose $W$ corresponds to the point $P \in
V'$, pick a point $P' \in V'$ and another point $P'' \in V'$ collinear
with $P$ and $P'$.  Then the subspaces in $\Sigma'$ corresponding to
$P'$ and $P''$ will span a subspace containing $W$. Let $n =
\frac{q^{t'+1}-1}{q-1}$.  The collection of $(n-1)$-subsets of
$s$-dimensional subspaces from $\Sigma'$ gives an $(m=(s+1)(t+1); n,
k=t+1, r=2, \alpha = s+1, \beta = \alpha)$-functional repair code.

\subsubsection{Focal spreads}
\label{sub:focal}

Let $\Sigma_{2t-1} = {\rm PG}(2t-1,q)$, $t>1$, and let $S_t$ be a
$(t-1)$-spread in $\Sigma_{2t-1}$.  Let $L$ be an element of $S_t$.
Let $\Sigma_{t+d-1}$, $t>d$, be a $(t+d-1)$-dimensional subspace of
$\Sigma_{2t-1}$ that contains $L$.  Then $\{L\} \cup \{ M' = M \cap
\Sigma_{t+d-1} \; | \; M \in S_t \setminus \{L\}\}$ is a focal spread
consisting of the focus $L$, and the $(d-1)$-dimensional subspaces
$M'$ partitioning the points of $\Sigma_{t+d-1}$ not in $L$.  Focal
spreads are described in greater details in \cite{JhaJohnson}.

In \cite{HollmannPoh} an $(m=5; n=4, k=3,
r=3,\alpha=2,\beta=1)$-functional repair code was constructed using
focal spreads with $t=3$, $d = 2$: a 2-spread in ${\rm PG}(5,2)$,
intersected by a 4-space, the focus being a plane, and there are 8
lines partitioning the points not in the plane.  The storage code
consists the collection of 3-subsets of these 8 lines.

This can clearly be generalised.  For example, using $t=4$, $d=2$, we
have the storage code being 16 lines partitioning the set of points of
a 5-dimensional space that are not contained in the focus, which is a
3-dimensional space.  A computer search shows that a line cannot be
obtained by $(3,1)$-repair but can be obtained by $(4,1)$-repair,
making this an $(m=6; n=16; k=3, r=4, \alpha = 2, \beta =
1)$-functional repair code.

However, the example in \cite{HollmannPoh} turns out to be strictly 
functional, while our generalisation allows both functional and exact repair.
Indeed, this appears to be the only strictly functional repair code that is
known (apart from Example \ref{eg:concurrent}).  In the next section we prove this property and examine the
structure further.

\section{Anatomy of a  strictly functional repair code}
\label{sec:strictly}

In \cite[Example 2.2 and Section VI]{HollmannPoh}, an $(m=5; n=4, k=3,
r=3,\alpha=2,\beta=1)$-functional repair code was given which turns out to be
a strictly functional repair code.  This is constructed using focal
spreads and is described in Section \ref{sub:focal}.  Here we prove
that it is strictly functional, and consider whether it can be
generalised.

Firstly we write the $(m=5; n=4, k=3,
r=3,\alpha=2,\beta=1)$-functional repair code according to Definition
\ref{defn:functional}:

\begin{defn} \label{defn:small}
Let $\Sigma = {\rm PG}(4,q)$ and let $\A$ be a set of 3-tuples $\U$ of 
lines such that
\begin{enumerate}[label=(\alph*)]
\item (Recovery) \label{small:rec} For every $\U \in \A$, the 3 lines
  in $\U$ span ${\rm PG}(4,q)$.
\item (Repair) \label{small:rep} For each $\U=\{U_1, U_2, U_3\}$ there
  is a point $P_i$ on $U_i$, $i = 1,2,3$, such that there is another
  line $U_4 \subseteq \langle P_1, P_2, P_3 \rangle$, and $\U_i' = \U
  \cup \{U_4\} \setminus \{U_i\}$, $i=1,2,3$, again belongs to 
$\A$. 
\end{enumerate}
\end{defn}

We will give a brief description of this construction in terms of
projective spaces.  We will describe the lines using the
correspondence between ${\rm PG}(1, 2^3)$ and the spread in ${\rm PG}(5,2)$ in the
manner described in Section \ref{sub:motivation}.

Write $\F_8$ as $\{ 0, \zeta^i : i=0, \ldots, 6, \zeta^3 =
\zeta+1\}$.  If $a = a_0 + a_1 \zeta + a_2 \zeta^2$ and $b = b_0 +
b_1 \zeta + b_2 \zeta^2$ then $(a, b) \in {\rm PG}(1, 2^3)$ can be thought
of as a point $(a_0, a_1, a_2, b_0, b_1, b_2)$ in ${\rm PG}(5, 2)$.  The
point $(a,b) \in {\rm PG}(1, 2^3)$ thus gives a plane $\Pi_{(a,b)}$ in
${\rm PG}(5,2)$ consisting of the points $\{(ax, bx) : x \in \F_8 \}$.
So the point $(1,0) \in {\rm PG}(1, 2^3)$ corresponds to the plane
$$\Pi_{(1,0)} = \langle (1,0,0,0,0,0), (0,1,0,0,0,0), (0,0,1,0,0,0)\rangle.$$
The point $(a,1)$, $a \in \F_8$, corresponds to the plane 
$$\Pi_{(a,1)} = 
\langle (a_0, a_1, a_2, 1, 0, 0), (a_2, a_0+a_2, a_1, 0, 1, 0), 
(a_1, a_1+a_2, a_0+a_2, 0, 0, 1) \rangle.$$

We can take the plane in the focal spread as the plane $\Pi_{(1,0)}$,
and the lines $l_a$ as the intersection of the hyperplane $x_5=0$ with
the planes $\Pi_{(a, 1)}$, $a \in \F_8$.  Treating the hyperplane $x_5=0$ as
${\rm PG}(4,q)$, we may write
$$ l_a = \{(a_0, a_1, a_2, 1, 0), (a_2, a_0+a_2, a_1, 0, 1), 
(a_0+a_2, a_0+a_1+a_2, a_1+a_2, 1,1)\}.$$

Let $\LL = \{l_a : a \in \F_8\}$.  The functional repair code consists of 
the collection of all 3-subsets of $\LL$.
It is not hard to show that any set of 3 lines
$l_a$, $l_b$, $l_c$ from $\LL$ will allow exactly \emph{one} line $l_d
\in \LL$ by $(3,1)$-repair,  and this line satisfies $d^2 = ab + ac +
bc$.  It is also not hard to see that the following two conditions ({\cite[Example 2.2]{HollmannPoh}}) are 
satisfied by the lines of $\LL$:
\begin{enumerate}
\item[(L1)]  Any 3 lines span ${\rm PG}(4,q)$.
\item[(L2)]  Any pair of lines are skew.
\end{enumerate}

This construction works for $q>2$, in the sense that such a
construction for focal spread works over $q>2$, and also a line can be
obtained by $(3,1)$-repair from any three lines (Theorem
\ref{thm:strictlyfunctional}).  However, it is not clear that there is a nice
relationship between $a$, $b$, $c$ and $d$, as in the case for $q=2$.
For example, for the case $q=3$:

Take $x^3-x+1=0$ over $GF(3)$ to get $GF(3^3) = \{0, \alpha^i \;|\;
\alpha^3=\alpha-1\}$.  The point $(a,1)$ on ${\rm PG}(1,3^3)$ with $a = a_0
+ a_1 \alpha + a_2 \alpha^2$ gives the plane 

$$ \langle (a_0, a_1, a_2, 1, 0,0), (-a_2, a_0+a_2, a_1, 0,1,0),
(-a_1, a_1-a_2, a_0+a_2, 0,0,1) \rangle$$

in ${\rm PG}(5,3)$.  Intersecting with $x_5=0$ gives lines $l_a = 
\langle  (a_0, a_1, a_2, 1, 0), (-a_2, a_0+a_2, a_1, 0,1) \rangle$.

We can construct $l_{\alpha^{12}}$ by $(3,1)$-repair from $l_0$, $l_1$
and $l_{\alpha}$, but it is not clear what the relationship between
$a$, $b$, $c$, $d$ is.

\subsection{The focal spread construction is strictly functional}
\label{sub:strictly}
The repair process described above corresponds to functional repair.  In this section we show that this is necessary: this FRC does not admit exact repair.  We begin with a geometric lemma that we will use in the proof of this fact.
\begin{lemma}\label{lem:uniquetransversal}
Let $\{\ell_1,\ell_2, \ell_3\}$ be lines in ${\rm PG}(4,q)$ that satisfy (L1) and (L2).  Then there is a unique line $m$ with $m\cap \ell_i\neq \emptyset$ for $i=1,2,3$.
\end{lemma}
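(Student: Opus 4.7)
The plan is to exploit the dimension constraints that (L1) and (L2) force on the pairwise and triple spans of the three lines, reducing the problem to finding a transversal through a specific point.

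First I would set $H = \langle \ell_1, \ell_2 \rangle$. Since $\ell_1$ and $\ell_2$ are skew, their vector-space sum has dimension $4$, so $H$ is a projective hyperplane (dimension $3$) in ${\rm PG}(4,q)$. By (L1) the three lines span ${\rm PG}(4,q)$, so $\ell_3 \not\subseteq H$, and therefore $\ell_3$ meets $H$ in exactly one point, call it $P$. Now any transversal $m$ to $\ell_1, \ell_2, \ell_3$ contains the points $m \cap \ell_1$ and $m \cap \ell_2$, both of which lie in $H$; since a line with two points in $H$ is contained in $H$, we get $m \subseteq H$, and so $m \cap \ell_3 \subseteq H \cap \ell_3 = \{P\}$. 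Thus every transversal passes through $P$.

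For existence, I would form the planes $\pi_1 = \langle P, \ell_1 \rangle$ and $\pi_2 = \langle P, \ell_2 \rangle$, both living inside $H$ (note $P \notin \ell_i$ since (L2) gives $\ell_i \cap \ell_3 = \emptyset$, so each $\pi_i$ really is a plane). Their spans together generate $H$, so by the standard dimension formula $\pi_1 \cap \pi_2$ is a projective line $m$; it passes through $P$ because $P \in \pi_1 \cap \pi_2$. Since $m$ and $\ell_1$ are coplanar in $\pi_1$, they intersect, and likewise $m \cap \ell_2 \neq \emptyset$. This $m$ is the desired transversal.

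For uniqueness, suppose $m'$ is any transversal. Then $P \in m'$, and $m' \cap \ell_1$ is a point of $\pi_1$ distinct from $P$ (else $P \in \ell_1 \cap \ell_3$, contradicting (L2)), so $m' \subseteq \pi_1$; symmetrically $m' \subseteq \pi_2$, giving $m' \subseteq \pi_1 \cap \pi_2 = m$, hence $m' = m$.

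The argument is essentially routine once (L1) is translated into ``$\ell_3$ punctures the hyperplane spanned by $\ell_1,\ell_2$ in a single point,'' so I do not expect a serious obstacle. The only mild subtlety is keeping track of when $P$ fails to lie on $\ell_1$ or $\ell_2$; this is exactly where (L2) is needed, both to guarantee that the planes $\pi_i$ are genuinely two-dimensional and to rule out a spurious second transversal.
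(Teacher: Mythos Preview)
Your proof is correct and follows essentially the same approach as the paper: both identify the hyperplane $H=\langle \ell_1,\ell_2\rangle$, locate the single point $P=\ell_3\cap H$, and use planes through $P$ inside $H$ to build the transversal. Your version is slightly more symmetric (intersecting the two planes $\pi_1,\pi_2$ rather than working inside a single plane $\sigma=\langle P,\ell_2\rangle$), and your uniqueness argument is more explicit than the paper's terse ``unique by construction,'' but the underlying idea is the same.
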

\begin{proof}
By (L2) we know that $\ell_1$ and $\ell_2$ span a hyperplane $\Pi\subset{\rm PG}(4,q)$.  By (L1) we know that $\ell_3$ intersects $\Pi$ in a unique point $P_3$.  Consider the plane $\sigma=\langle P_3, \ell_2 \rangle$.  Since $\ell_1$ and $\ell_2$ span $\Pi$, it follows that $\sigma$ intersects $\ell_1$ in a unique point $P_1$.  The line $m=\langle P_1,P_3\rangle \neq \ell_2$ lies in $\sigma$, as does $\ell_2$, and hence these two lines intersect in a unique point $P_2$.  Thus the line $m$ intersects each of the lines $\ell_1$, $\ell_2$ and $\ell_3$, and it is unique by construction.
\end{proof}

\begin{thm} \label{thm:strictlyfunctional}
Let $\{\ell_1,\ell_2, \ell_3, \ell_4\}$ be lines in ${\rm PG}(4,q)$ that satisfy (L1) and (L2).  Then at most one of the lines can be obtained by exact $(3,1)$-repair from the remaining three lines.
\end{thm}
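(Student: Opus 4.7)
My plan is to argue by contradiction: suppose two of the four lines, say $\ell_3$ and $\ell_4$, can both be obtained by exact $(3,1)$-repair from the remaining triples. The definition of exact $(3,1)$-repair then furnishes points $A_i \in \ell_i$ for $i=1,2,3$ with $\ell_4 \subseteq \sigma_4 := \langle A_1,A_2,A_3\rangle$, together with points $B_1 \in \ell_1$, $B_2 \in \ell_2$, $B_4 \in \ell_4$ such that $\ell_3 \subseteq \sigma_3 := \langle B_1,B_2,B_4\rangle$. The goal is to show that the span $H := \langle \sigma_3,\sigma_4\rangle$ is forced to be a hyperplane containing three of the $\ell_i$, contradicting (L1).

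I would first collect some elementary geometric facts. The triple $A_1,A_2,A_3$ cannot be collinear, for otherwise $\sigma_4$ would be a line containing $\ell_4$ and meeting each of $\ell_1,\ell_2,\ell_3$, violating (L2); hence $\sigma_4$ is genuinely a plane, and by the same reasoning so is $\sigma_3$. Since $\ell_3$ and $\ell_4$ are skew by (L2), they cannot both lie in one plane, so $\sigma_3 \neq \sigma_4$. The points $A_3 = \ell_3 \cap \sigma_4$ and $B_4 = \ell_4 \cap \sigma_3$ both lie in $\sigma_3 \cap \sigma_4$ and are distinct (they sit on skew lines), so $\sigma_3 \cap \sigma_4$ is a line $n$, and a standard dimension count gives $\dim H = 3$.

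The decisive step is a dichotomy applied to $i \in \{1,2\}$. Skewness prevents $\ell_i$ from being contained in either $\sigma_3$ or $\sigma_4$, so $\sigma_4 \cap \ell_i = \{A_i\}$ and $\sigma_3 \cap \ell_i = \{B_i\}$ are single points, both lying in $H$. Either $A_i = B_i$, in which case the common point lies on $n$, or $A_i \neq B_i$ and then $\ell_i \subseteq H$ (since $H$ contains two distinct points of $\ell_i$). If the latter alternative held for both $i=1$ and $i=2$, then $\ell_1,\ell_2,\ell_3 \subseteq H$, contradicting (L1). So without loss of generality $A_1 = B_1$, and this common point lies on $n$.

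Finally I would close the argument using the non-degeneracy of the triangle $A_1 A_2 A_3 \subset \sigma_4$: since $n$ already contains the two vertices $A_1$ and $A_3$, the third vertex $A_2$ lies off $n$. Consequently $A_2 \neq B_2$ (were they equal, $A_2$ would belong to $\sigma_3 \cap \sigma_4 = n$), and the dichotomy of the previous paragraph forces $\ell_2 \subseteq H$. But then $\ell_2,\ell_3,\ell_4$ all sit inside the $3$-dimensional hyperplane $H$, contradicting (L1) applied to this triple. I expect the main obstacle is organising the case split cleanly and spotting that non-collinearity of $A_1 A_2 A_3$ eliminates the surviving case; the conceptual reason the argument works is that the two repair planes jointly span only a hyperplane, which is simply too small to accommodate four pairwise skew lines subject to (L1).
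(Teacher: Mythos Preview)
Your argument is correct. It takes a genuinely different route from the paper's proof, so a brief comparison is in order.

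The paper first isolates a geometric lemma: given three lines satisfying (L1) and (L2), there is a \emph{unique} transversal line meeting all three. It then assumes two lines (say $\ell_4$ and $\ell_1$) are each repairable and observes that the sides $\langle P_1,P_2\rangle$ and $\langle P_1,P_3\rangle$ of the repair triangle for $\ell_4$ must coincide with the unique transversals $m_{124}$ and $m_{134}$. Applying the same reasoning to the repair triangle for $\ell_1$ forces $Q_2=P_2$ and $Q_3=P_3$, whence $\ell_1\subset\langle P_1,P_2,P_3\rangle$, contradicting (L2).

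Your proof avoids the transversal lemma entirely. Instead you work directly with the two repair planes $\sigma_3,\sigma_4$, exhibit two distinct points $A_3,B_4$ in their intersection, and conclude by Grassmann that $H=\langle\sigma_3,\sigma_4\rangle$ is only a hyperplane. The dichotomy ``$A_i=B_i$ (hence $A_i\in n$) or $\ell_i\subseteq H$'' then does the work: non-collinearity of $A_1,A_2,A_3$ rules out both $A_1,A_2\in n$, so at least one $\ell_i$ with $i\in\{1,2\}$ lies in $H$ together with $\ell_3,\ell_4$, contradicting (L1).

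What each approach buys: the paper's transversal argument is more ``local'' and yields as a by-product the explicit description of the repair triangle's sides as canonical transversals, which is conceptually illuminating. Your dimension-counting argument is more self-contained (no auxiliary lemma needed) and makes transparent the underlying obstruction: two repair planes span at most a hyperplane, and (L1) forbids three of the four lines from fitting inside one. The contradictions are even drawn from different hypotheses---the paper from (L2), you from (L1).
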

\begin{proof}
Suppose (without loss of generality) that $\ell_4$ can be obtained by $(3,1)$-repair from $\{\ell_1,\ell_2,\ell_3\}$.  Then there exist points $P_1\in \ell_1$, $P_2\in \ell_2$ and $P_3\in \ell_3$ such that $\ell_4\subseteq \langle P_1,P_2,P_3\rangle$.  We note that it is not the case that $\ell_4= \langle P_1,P_2,P_3\rangle$, for this would imply that $\ell_4=\langle P_1,P_2\rangle$, in which case $\ell_4$ would be contained in $\langle \ell_1,\ell_2\rangle$, in violation of (L1).  Hence $\ell_4\subset \langle P_1,P_2,P_3\rangle$.  The line $\langle P_1, P_2 \rangle $ therefore intersects $\ell_4$ in a unique point, and hence by Lemma~\ref{lem:uniquetransversal} is the unique line $m_{124}$ meeting $\ell_1$, $\ell_2$ and $\ell_4$.  Similarly, $\langle P_1, P_3\rangle $ is the unique line $m_{134}$ meeting $\ell_1$, $\ell_3$ and $\ell_4$.

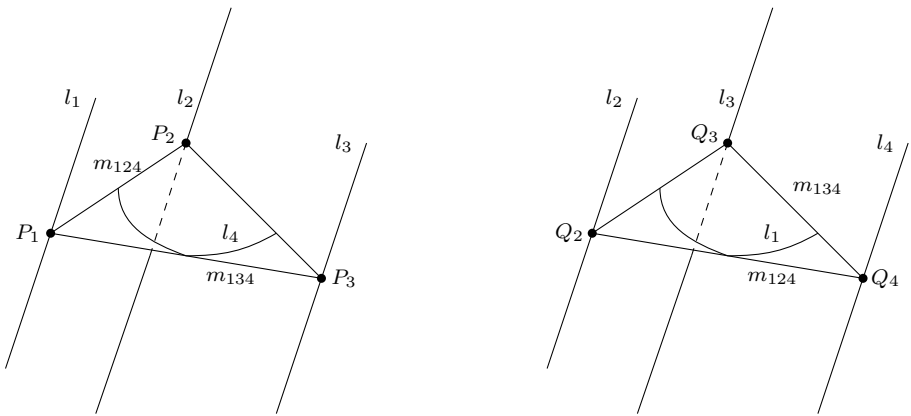
\begin{figure} 
\caption{Repair of  $l_4$ and $l_1$.}\label{fig:l4l1}
\begin{center}
\begin{tikzpicture}[scale=0.6]
\draw  (1,5) -- (4,7) -- (7,4)--(1,5); 
\draw (2.5,6)to [out=270,in=160] (4,4.5) to [out=0,in=210] (6,5); 
\node at (5,5) {{\scriptsize $l_4$}};
\draw (0,2) -- (2,8); \node at (1.5,8) {{\scriptsize $l_1$}};
\draw (2,1) -- (3.2,4.5);  \node at (4,8) {{\scriptsize $l_2$}};
\draw[dashed] (3.2,4.5) -- (4,7);
\draw (4,7) -- (5,10);
\draw (6,1) -- (8,7);  \node at (7.5,7) {{\scriptsize $l_3$}};
\fill[black] (1,5) circle (0.1cm); \node at (0.5,5) {{\scriptsize $P_1$}};
\fill[black] (4,7) circle (0.1cm); \node at (3.5,7.2) {{\scriptsize $P_2$}};
\fill[black] (7,4) circle (0.1cm); \node at (7.5,4) {{\scriptsize $P_3$}};
\node at (2.5,6.5) {{\scriptsize $m_{124}$}};
\node at (5,4) {{\scriptsize $m_{134}$}};

\draw  (13,5) -- (16,7) -- (19,4)--(13,5); 
\draw (14.5,6)to [out=270,in=160] (16,4.5) to [out=0,in=210] (18,5); 
\node at (17,5) {{\scriptsize $l_1$}};
\draw (12,2) -- (14,8); \node at (13.5,8) {{\scriptsize $l_2$}};
\draw (14,1) -- (15.2,4.5);  \node at (16,8) {{\scriptsize $l_3$}};
\draw[dashed] (15.2,4.5) -- (16,7);
\draw (16,7) -- (17,10);
\draw (18,1) -- (20,7);  \node at (19.5,7) {{\scriptsize $l_4$}};
\fill[black] (13,5) circle (0.1cm); \node at (12.5,5) {{\scriptsize $Q_2$}};
\fill[black] (16,7) circle (0.1cm); \node at (15.5,7.2) {{\scriptsize $Q_3$}};
\fill[black] (19,4) circle (0.1cm); \node at (19.5,4) {{\scriptsize $Q_4$}};
\node at (17,4) {{\scriptsize $m_{124}$}};
\node at (18,6) {{\scriptsize $m_{134}$}};
\end{tikzpicture}
\end{center}
\end{figure}

Suppose now that some other line (say, $\ell_1$) can be obtained by $(3,1)$-repair from the remaining lines (i.e. $\{\ell_2,\ell_3,\ell_4\}$).  See Figure~\ref{fig:l4l1}.   Repeating the above argument we observe that there are points $Q_2\in \ell_2$, $Q_3\in \ell_3$ and $Q_4\in \ell_4$ such that $\ell_1\subset \langle Q_2,Q_3,Q_4\rangle$.  However, in this case the line $\langle Q_2,Q_4\rangle$ meets $\ell_1$ in a point, which implies $\langle Q_2,Q_4\rangle=m_{124}$ (by Lemma~\ref{lem:uniquetransversal}), and so $Q_2=P_2$.  Similarly, $\langle Q_3,Q_4\rangle$ meets $\ell_1$ in a point, so $\langle Q_3,Q_4\rangle=m_{134}$, and so $Q_3=P_3$.  But now we have that $Q_2,Q_3,Q_4\in \langle P_1,P_2,P_3\rangle$, and hence $\ell_1\subset \langle P_1,P_2,P_3\rangle$.  This contradicts the fact that $\ell_1$ and $\ell_4$ are not coplanar, by (L2).
\end{proof}

This shows that this focal spread construction is strictly functional:
one can always construct a fourth line $l_4 = m$ from any three lines
$l_1, l_2, l_3$, and if one of $l_1$, $l_2$ or $l_3$ fails, it cannot
be repaired exactly from the three remaining lines.

\section{A simpler description}
\label{sec:simpler}

In our examples and constructions, we could enumerate a set of
subspaces, and simply state that a collection of subsets of these
subspaces constitute a functional repair code, bypassing the recursive
nature of the definition (Definition \ref{defn:functional}).

However, such a description is not always useful, or easy to arrive
at.  Firstly, we would in general like to find small codes.  As an
example, Theorem \ref{thm:lines} allows $\LL$ to be the set of all
lines in a projective plane, but we see in Example \ref{eg:lines} that
3 lines suffices.  Hollmann and Poh \cite[Theorem 5.1]{HollmannPoh}
give a method of starting with a possible set of subspaces $\U =
\{U_1, \ldots, U_{n-1}\}$ and another subspace $U_n$ constructed by
$(r, \beta)$-repair from $\U$, and obtaining a functional repair code
from it using the image under a group action.  In Section
\ref{sec:digraphs} we model this process of building a functional
repair code using digraphs.

Secondly, this kind of description does not always convey the
complications of the repair process.  We illustrate with an example.
The focal spread construction of Section \ref{sec:strictly} admits a  straigtforward description
similar to that of Theorem \ref{thm:lines}: 

\begin{quote}
Let $\LL$ be a set of lines in $\Sigma={\rm PG}(4,q)$ satisfying conditions
(L1), (L2):
\begin{enumerate}
\item[(L1)]  Any 3 lines span ${\rm PG}(4,q)$.
\item[(L2)]  Any pair of lines are skew.
\end{enumerate}
Let $\A$ be a collection
of 3-subsets of $\LL$.  Then $(\Sigma,\A)$ is a functional repair code.
\end{quote}

If we were to want to construction a set of such lines, how would we
start?  Because $\LL$ is a strictly functional repair code (Theorem
\ref{thm:strictlyfunctional}), given a 3-subset $\{l_1, l_2, l_3\}$ in $\A$, we
obtain an $l_4$ by $(3,1)$-repair, but the 3-subset containing $l_4$,
say, $\{l_2,l_3,l_4\}$ will give an $l_5 \neq l_1$ by $(3,1)$-repair.
This motivates the following steps in the construction:

Let $\LL$ be a set of three lines satisfying (L1), (L2) to start with.
\begin{enumerate}
\item Take any 3 lines of $\LL$.  Use $(3,1)$-repair to get a fourth line.
\item Add this fourth line to $\LL$ if it is not already in it.  
\item Repeat until no new lines are constructed.
\end{enumerate}

Take $\A$ to be the 3-subsets of $\LL$.
Then $\A$ is a functional repair code \'{a} la Definition 
\ref{defn:small}.

This motivates a clearer modelling of the repair properties.  We examine
this in the next section.

\section{The repair condition as digraphs}
\label{sec:digraphs}

We write this with $m=5$, $n=4$, $k=3$,
$r=3$, $\alpha=2$ $\beta=1$, for simplicity, 
but it can easily be written more generally.

We can think of the repair condition (Definition
\ref{defn:functional}\ref{pjfn:repair}) of an $(m; n,
k,r,\alpha,\beta)$-functional repair code $(\Sigma,\A)$ as a bipartite
digraph $\graph(\A) = (\vertex(\A) \cup \vertex'(\A), \edge
\cup \edge')$ as follows:

Let $\vertex(\A)$ be a set of vertices corresponding to the sets $\U$
of 3 lines in $\A$ - each set $\U \in \A$ is a vertex in
$\vertex(\A)$.  By the repair condition, one could obtain a fourth
line $U'$ by $(r, \beta)$-repair from any set $\U$ of 3 lines.  Let
$\vertex'(\A)$ be another set of vertices corresponding to these sets
$\U \cup \{U'\}$, $\U \in \A$, of four lines.  The set of vertices of
$\graph(\A)$ will be the (disjoint) union of these two sets
of vertices.

The (directed) edges of $\graph( \Sigma, \A)$ are defined as follows:
There is an edge from $V = \{U_1, U_2, U_3\} \in \vertex(\A)$ to 
$V' = \{U_1, U_2, U_3, U_4 \} \in \vertex'(\A)$ if and
only if $U_4$ is obtain by $(r, \beta)$-repair from $\{U_1, U_2, U_3\}$.
We denote this set of edges by $\edge$.  In addition, there is an edge
from $V' = \{U_1, U_2, U_3, U_4 \} \in \vertex'(\A)$ to $V \in \vertex(\A)$
if and only if $V = V' \setminus \{U_i\}$, $i \in \{1,2,3,4\}$.
We denote this set of edges by $\edge'$.  The set of edges of 
$\graph(\A)$ will be the (disjoint) union of these two sets
of edges.

Clearly there are edges only between $\vertex(\A)$ and $\vertex'(\A)$
and $\graph(\A)$ is a bipartite digraph.  An edge from $\vertex(\A)$
to $\vertex'(\A)$ signifies a repair while an edge from $\vertex'(\A)$
to $\vertex(\A)$ signifies a node failure.  Figure \ref{pic:bipartite}
gives a small example of what the node failures and repairs might look like.

\begin{figure} \caption{$\graph(\A)$ with $n=4$, $k=3$,
$r=3$.}
\label{pic:bipartite}
\begin{center}
\begin{tikzpicture}[scale=1]

\draw (0,6) ellipse (1.5cm and 0.5cm);
\draw (5,6) ellipse (1.5cm and 0.5cm);
\draw (10,6) ellipse (1.5cm and 0.5cm);
\node at (0,6) {$U_1\;U_2\;U_3\;U_4$};
\node at (5,6) {$U_1\;U_2\;U_3\;U_4'$};
\node at (10,6) {$U_1'\;U_2\;U_3\;U_4'$};

\node at (-2,4.5) {$\vertex'(\A)$};
\draw[dashed] (-2,4) --  (13,4);
\node at (-2,3.5) {$\vertex(\A)$};

\draw (2.5,2) ellipse (1cm and 0.5cm);
\draw (7.5,2) ellipse (1cm and 0.5cm);
\draw (12.5,2) ellipse (1cm and 0.5cm);
\node at (2.5,2) {$U_1\;U_2\;U_3$};
\node at (7.5,2) {$U_2\;U_3\;U_4'$};
\node at (12.5,2) {$U_1'\;U_3\;U_4'$};

\draw (1,5.5) edge[out=270,in=120,->] (2,2.5);
\node at (0.5,4.5) {\footnotesize $U_4$ fail};
\draw (6,5.5) edge[out=270,in=120,->] (7,2.5);
\node at (5.5,4.5) {\footnotesize $U_1$ fail};
\draw (11,5.5) edge[out=270,in=120,->] (12,2.5);
\node at (10.5,4.5) {\footnotesize $U_2$ fail};

\draw (3,2.5) edge[out=30,in=270,->] (4.5,5.5);
\node[text width = 1.5cm] at (4,3.5) {\footnotesize{Repair to $U_4'$}};
\draw (8,2.5) edge[out=30,in=270,->] (9.5,5.5);
\node[text width = 1.5cm] at (9,3.5) {\footnotesize{Repair to $U_1'$}};

\end{tikzpicture}
\end{center}
\end{figure}

Since each node may fail, there must be four out-edges from each vertex
in $\vertex'(\A)$, and since every three nodes must be able to repair
a fourth node, there must be at least one out-edge from each vertex
in $\vertex(\A)$.

\begin{defn}\label{defn:graph}
Let $\graph = (V_1 \cup V_2, E)$ be a bipartite digraph with parts
$V_1$, $V_2$.  We say that $\graph$ satisfies the repair condition if
all vertices in $V_1$ has outdegree at least 1 and all vertices in
$V_2$ has outdegree $n$.
\end{defn}

This view of a functional repair code immediately gives us some idea on
the number of subspaces we need and the size of $\A$, as well as the
characterisation of exact repair. 

\begin{lemma} \label{lemma:num}
$$|\vertex(\A)| \le {\Num \choose n-1}, \;\;   
|\vertex'(\A)| \le {\Num \choose n}.$$
As a consequence, $\Num \ge n$. 
\end{lemma}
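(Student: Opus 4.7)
The plan is to exploit the observation that the vertices in both $\vertex(\A)$ and $\vertex'(\A)$ are subsets of the fixed collection of $\Num$ distinct subspaces appearing in $\A$. For $\vertex(\A)$ this is immediate: each vertex is by construction an $(n-1)$-element member of $\A$, so it is an $(n-1)$-subset of the $\Num$ subspaces used, and hence $|\vertex(\A)| \le \binom{\Num}{n-1}$.

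For the bound on $|\vertex'(\A)|$ the main step is to verify that each vertex in $\vertex'(\A)$ really is an $n$-subset of the same $\Num$ subspaces. I would argue this in two parts. First, the repaired subspace $U'$ added to some $\U = \{U_1,\dotsc,U_{n-1}\} \in \A$ must itself be one of the $\Num$ used subspaces, because Definition~\ref{defn:functional}\ref{pjfn:repair} requires $\U \cup \{U'\} \setminus \{U_i\} \in \A$ for each $i$, so $U'$ appears in some member of $\A$. Second, $U'$ must be distinct from each $U_j \in \U$: if $U' = U_j$ for some $j$, then taking $i \neq j$ in the repair condition would force $\U \cup \{U'\} \setminus \{U_i\} = \U \setminus \{U_i\}$, which has only $n-2$ elements and so cannot belong to $\A$. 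Thus each element of $\vertex'(\A)$ is a genuine $n$-subset of the $\Num$ subspaces, giving $|\vertex'(\A)| \le \binom{\Num}{n}$.

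Finally, $\Num \ge n$ follows because $\A$ is nonempty (otherwise there is no FRC to speak of), so the repair condition provides at least one vertex in $\vertex'(\A)$, forcing $\binom{\Num}{n} \ge 1$ and hence $\Num \ge n$. The only subtle point in the whole argument is the distinctness of $U'$ from the $U_j$'s, which I expect to be the one step where a reader could otherwise conflate ``$\U \cup \{U'\}$'' (set union) with a multiset-style concatenation; everything else is a direct counting observation.
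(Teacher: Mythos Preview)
Your proposal is correct. The paper states this lemma without proof, treating it as an immediate counting observation; you have simply supplied the details (including the verification that the repaired subspace $U'$ lies among the $\Num$ subspaces and is distinct from each $U_j$), and your reasoning matches the intended, straightforward argument.
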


\begin{lemma} \label{lemma:edge}
$$ |\edge(\A)| \ge |\vertex(\A)|, \;\; |\edge'(\A)| = n|\vertex'(\A)|.$$
\end{lemma}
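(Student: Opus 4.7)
The plan is to read both statements directly off the bipartite digraph definition, using the repair axiom (Definition~\ref{defn:functional}\ref{pjfn:repair}) to control out-degrees on each side. Both parts reduce to a bookkeeping argument: count the out-edges from each vertex and sum. Since $\graph(\A)$ is bipartite with all edges in $\edge$ going from $\vertex(\A)$ to $\vertex'(\A)$ and all edges in $\edge'$ going the other way, the two totals can be analysed independently.

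For the inequality $|\edge(\A)| \ge |\vertex(\A)|$, the strategy is to show that every vertex in $\vertex(\A)$ has out-degree at least $1$ in $\edge$. Given any $\U \in \vertex(\A)$, the repair condition guarantees the existence of some $(\alpha-1)$-dimensional subspace $U_n$ obtained from $\U$ by $(r,\beta)$-repair, with the additional property that $\U \cup \{U_n\} \setminus \{U_i\} \in \A$ for each $U_i \in \U$. In particular, $\U \cup \{U_n\}$ is a vertex of $\vertex'(\A)$, and by the definition of $\edge$ this creates an out-edge from $\U$. Summing over $\vertex(\A)$ and noting that edges from different source vertices are necessarily distinct gives the bound. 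The inequality is not an equality because a given $\U$ may admit several inequivalent $(r,\beta)$-repair choices producing different $U_n$'s.

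For the identity $|\edge'(\A)| = n|\vertex'(\A)|$, the strategy is to show that every vertex in $\vertex'(\A)$ has out-degree exactly $n$ in $\edge'$. Let $V' = \{U_1,\ldots,U_n\} \in \vertex'(\A)$. By construction $V'$ arises as $\U \cup \{U_n\}$ for some $\U \in \A$ and some $U_n$ obtained by $(r,\beta)$-repair from $\U$. The subset $V' \setminus \{U_n\} = \U$ lies in $\A$ by assumption, while for each $U_i \in \U$ the set $V' \setminus \{U_i\} = \U \cup \{U_n\} \setminus \{U_i\}$ lies in $\A$ by the repair condition. Thus every one of the $n$ subsets $V' \setminus \{U_i\}$ is a vertex of $\vertex(\A)$, and by the definition of $\edge'$ each gives rise to an out-edge from $V'$. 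Since $V'$ has $n$ distinct elements, the $n$ subsets $V' \setminus \{U_i\}$ are pairwise distinct, so these $n$ out-edges terminate at $n$ distinct vertices and contribute $n$ distinct edges. Summing over $\vertex'(\A)$ gives the claimed equality.

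The main thing to watch out for is the second part: one must check that \emph{every} removal of a single element from $V'$ lands in $\A$, not only those removals that correspond to the original repair set $\U$ from which $V'$ was built. This is handled by observing that removing $U_n$ returns the original $\U \in \A$, while removing any other $U_i$ is covered directly by the clause of Definition~\ref{defn:functional}\ref{pjfn:repair} that requires $\U \cup \{U_n\} \setminus \{U_i\} \in \A$ for all $i$. Once this is in place, the two counting arguments are routine.
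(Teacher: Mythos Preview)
Your argument is correct and matches the paper's approach exactly: the paper does not give a formal proof of this lemma, but the key observation---out-degree at least $1$ from each vertex of $\vertex(\A)$ and out-degree exactly $n$ from each vertex of $\vertex'(\A)$---is stated in the sentence immediately preceding Definition~\ref{defn:graph}. Your write-up simply fills in the routine bookkeeping, including the check that every deletion from a vertex of $\vertex'(\A)$ lands back in $\A$, which is precisely the content of the repair clause in Definition~\ref{defn:functional}\ref{pjfn:repair}.
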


This leads to the characterisation:

\begin{lemma}\label{lemma:exact}
A functional repair code $(\Sigma, \A)$ is an exact repair code 
if and only if $\graph(\A)$ is a complete bipartite digraph (with an
in-edge and and out-edge between each pair of vertices from different parts) 
with $|\vertex(\A)|=n$, $|\vertex'(\A)|=1$.
\end{lemma}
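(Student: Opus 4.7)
The proof is a straightforward unpacking of definitions; the only substantive point is to nail down that an exact repair code is forced to have the ``minimal'' structure consisting of the $(n-1)$-subsets of a single repairable $n$-set. I would split the argument along the iff.

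For the $(\Leftarrow)$ direction, I would start by naming the unique element of $\vertex'(\A)$ as $R=\{U_1,\ldots,U_n\}$. Since every vertex in $\vertex'(\A)$ has outdegree $n$ (Definition~\ref{defn:graph}), and these $n$ out-edges must land in $\vertex(\A)$, each on a distinct vertex of the form $R\setminus\{U_i\}$, the assumption $|\vertex(\A)|=n$ forces $\vertex(\A)=\{R\setminus\{U_i\}:i=1,\ldots,n\}$, i.e.\ $\A=\{R\setminus\{U_i\}\}$. The complete-bipartiteness then gives an edge $R\setminus\{U_i\}\to R$ for each $i$, which by the definition of the edges $\edge$ means that the $(r,\beta)$-repair from $R\setminus\{U_i\}$ produces a subspace $U$ with $(R\setminus\{U_i\})\cup\{U\}=R$, forcing $U=U_i$. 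Since $R$ is the only repairable set of $\A$, this verifies the exact repair property (Definition~\ref{defn:exact}) trivially.

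For the $(\Rightarrow)$ direction, assume $(\Sigma,\A)$ is an exact repair code. Pick any $\U\in\A$; the repair property gives a $U'$ with $R:=\U\cup\{U'\}$ repairable, hence $R\in\vertex'(\A)$, and by the observation after Definition~\ref{defn:exact} we may pass to $\A'=\{R\setminus\{U_i\}:i=1,\ldots,n\}$ (which is itself an exact repair code with the same graph-structural predictions). So assume $\A$ is of this form. Clearly $\vertex(\A)=\A$ has size exactly $n$. For $|\vertex'(\A)|=1$, I would suppose for contradiction that some other repairable $n$-set $R'=(R\setminus\{U_i\})\cup\{V\}$ with $V\notin R$ lies in $\vertex'(\A)$. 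Then for any $j\neq i$, $R'\setminus\{U_j\}$ must be in $\A$; but this set contains $V$ yet misses $U_j$, so it cannot equal any $R\setminus\{U_\ell\}$ (all of which contain every element of $R$ except one, and $V$ is not in $R$). This contradiction pins down $\vertex'(\A)=\{R\}$. Finally, exactness provides the edge $R\setminus\{U_i\}\to R$ for each $i$, and the four ``subset'' edges $R\to R\setminus\{U_i\}$ come for free from the definition of $\edge'$, so the graph is complete bipartite.

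The main obstacle is the $\vertex'(\A)=\{R\}$ step: one has to rule out a second repairable $n$-set sharing $n-1$ elements with $R$ but swapping in a fresh subspace $V$, and it is precisely the fact that $\A$ has been pruned to the minimal form $\{R\setminus\{U_i\}\}$ which makes such a $V$ impossible. Everything else is bookkeeping on outdegrees forced by Definition~\ref{defn:graph} together with Lemmas~\ref{lemma:num}--\ref{lemma:edge}.
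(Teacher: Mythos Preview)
The paper does not supply a proof of this lemma; it is stated and immediately followed by the remark that a functional repair code \emph{admits} exact repair whenever it has a \emph{subgraph} of this shape.  So there is nothing to compare against, and your argument must stand on its own.

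Your $(\Leftarrow)$ direction is fine: once $\vertex'(\A)=\{R\}$ and $|\vertex(\A)|=n$ are assumed, the $n$ out-edges from $R$ pin down $\A=\{R\setminus\{U_i\}\}$, the only repairable $n$-set is $R$, and the in-edges $R\setminus\{U_i\}\to R$ certify exact repair.

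The $(\Rightarrow)$ direction, however, does not go through as written.  The sentence ``we may pass to $\A'=\{R\setminus\{U_i\}\}$ \ldots\ so assume $\A$ is of this form'' silently replaces $\graph(\A)$ by $\graph(\A')$, but the lemma is a claim about $\graph(\A)$.  This is not a harmless reduction: take $q$ large and two triples of non-concurrent lines $\{l_1,l_2,l_3\}$, $\{m_1,m_2,m_3\}$ in ${\rm PG}(2,q)$ with all six lines distinct, and set
\[
\A \;=\; \bigl\{\{l_i,l_j\}:i\neq j\bigr\}\;\cup\;\bigl\{\{m_i,m_j\}:i\neq j\bigr\}.
\]
Each pair repairs exactly to the missing third line (as in Example~\ref{eg:lines}), both triples are repairable, no mixed triple is repairable, and Definition~\ref{defn:exact} is satisfied.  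Yet $|\vertex(\A)|=6\neq 3=n$ and $|\vertex'(\A)|=2$, so the graph is not of the claimed form.  Thus the forward implication, read literally, is false, and no amount of bookkeeping after ``passing to $\A'$'' will recover it.  What you have actually proved is the statement the paper uses one line later: an exact repair code always \emph{contains} a sub-FRC whose graph is complete bipartite with parts of sizes $n$ and $1$.  If you want the iff as stated, you need an extra hypothesis (minimality of $\A$, or connectedness of $\graph(\A)$) and you should say so explicitly rather than smuggling it in via the passage to $\A'$.
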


A functional repair code admits exact repair if it has a subgraph
that satisfies the condition in Lemma \ref{lemma:exact}, while
a strictly functional repair code 
would satisfy the condition that there exists $V' \in \vertex'{\A}$,
$V \in \vertex(\A)$, such that $(V', V) \in \edge'(\A)$ but $(V, V')
\not\in \edge(\A)$.

We illustrate this with the strictly functional repair code of Example 
\ref{eg:concurrent}.  Figure \ref{fig:strictly} is the digraph corresponding
to the example.   The dotted lines represent repairs.
The node $\{l_1,l_2,l_3\}$ and the dashed lines show that if any of 
$l_1$, $l_2$ or $l_3$ failed, they cannot be repaired from the remaining
lines.  And if all nodes containing $l_1$ are removed, we have an exact repair code consisting of three non-concurrent lines.

\begin{figure} \caption{$\graph(\A)$} \label{fig:strictly}
\begin{center}
\begin{tikzpicture}[scale=1]
\draw (1,1) ellipse (0.8cm and 0.5cm);
\node at (1,1) {$l_1\;l_2$};
\draw (4,1) ellipse (0.8cm and 0.5cm);
\node at (4,1) {$l_1\;l_3$};
\draw (7,1) ellipse (0.8cm and 0.5cm);
\node at (7,1) {$l_2\;l_3$};
\draw (10,1) ellipse (0.8cm and 0.5cm);
\node at (10,1) {$l_1\;l_4$};
\draw (13,1) ellipse (0.8cm and 0.5cm);
\node at (13,1) {$l_2\;l_4$};
\draw (16,1) ellipse (0.8cm and 0.5cm);
\node at (16,1) {$l_3\;l_4$};
\draw (2.5,4) ellipse (1cm and 0.5cm);
\node at (2.5,4) {$l_1\;l_2\;l_3$};
\draw (6.5,4) ellipse (1cm and 0.5cm);
\node at (6.5,4) {$l_1\;l_2\;l_4$};
\draw (10.5,4) ellipse (1cm and 0.5cm);
\node at (10.5,4) {$l_1\;l_3\;l_4$};
\draw (14.5,4) ellipse (1cm and 0.5cm);
\node at (14.5,4) {$l_2\;l_3\;l_4$};
\draw[dashed] (1.6,3.7) edge[out=200,in=100,->] (1,1.5);
\draw[dashed] (2.5,3.5) edge[out=315,in=100,->] (3.5,1.5);
\draw[dashed] (3.3,3.7) edge[out=315,in=100,->] (6.5,1.5);

\draw[red] (5.6,3.7) edge[out=200,in=100,->] (1.5,1.4);
\draw[red] (6.5,3.5) edge[out=270,in=100,->] (9.5,1.5);
\draw[red] (7.3,3.7) edge[out=315,in=120,->] (12.5,1.5);

\draw[blue] (9.6,3.7) edge[out=200,in=80,->] (4.5,1.4);
\draw[blue] (10.5,3.5) edge[out=315,in=100,->] (10,1.5);
\draw[blue] (11.3,3.7) edge[out=315,in=100,->] (15.5,1.5);

\draw[green] (13.6,3.7) edge[out=200,in=80,->] (7.5,1.4);
\draw[green] (14.5,3.5) edge[out=315,in=100,->] (13,1.5);
\draw[green] (15.3,3.7) edge[out=315,in=100,->] (16,1.5);

\draw[dotted] (1.7,1.5) edge[out=60,in=225,->] (5.7,3.6);
\draw[dotted] (4.7,1.5) edge[out=60,in=215,->] (9.8,3.6);
\draw[dotted] (7.7,1.5) edge[out=60,in=225,->] (13.7,3.6);

\draw[dotted] (9.7,1.5) edge[out=100,in=315,->] (6.7,3.5);
\draw[dotted] (10.3,1.5) edge[out=60,in=345,->] (10.7,3.5);

\draw[dotted] (12.7,1.5) edge[out=120,in=345,->] (7.5,3.6);
\draw[dotted] (13.3,1.5) edge[out=60,in=325,->] (14.7,3.5);

\draw[dotted] (15.7,1.5) edge[out=100,in=325,->] (11.5,3.7);
\draw[dotted] (16.2,1.5) edge[out=60,in=325,->] (15.5,3.8);
\end{tikzpicture}
\end{center}
\end{figure}

Note that we are only encoding the repair process.  We say nothing about $m$, 
$q$, $r$, $k$, $\beta$ and $\alpha$. If a bipartite digraph satisfies
the repair condition  it still doesn't say if it can be realised by
any parameters.  We call the digraph $\graph$ realisable  if
there is $(m, q, r, k, \beta, \alpha)$ such that there is an 
$(m; n, k, r, \alpha, \beta)$-functional repair code (FRC) $({\rm PG}(m-1,q), \A)$
with $\graph(\A) \equiv \graph$.

\section{Further work}
\label{sec:further}

The construction of Theorem \ref{thm:lines} does not require the
projective plane to be Desarguesian.  This naturally leads to the
question of whether one could construct more functional repair codes
from designs, since linearity is not required.  This approach may be
useful for functional repair code requiring repair-by-transfer
(\cite{ShumHu, Babu-Kumar, SilbersteinEtzion}), where the nodes
contributing information for repair do not perform any computations.
There has also been studies of locally repairable codes via matroid
theory (\cite{AlmostAffine,Matroid}) which may also be of interest for
functional repair codes.

Construction \ref{cons:dualarc} gives a functional repair code that is
flexible in terms of locality and availability for node repairs. There
are some recent work (\cite{local-avail}) in \emph{symbol localilty
  and availability}: not necessarily repairing whole nodes but only
some symbols in a node.  It would be intresting to see how this translate
into projective geometry.
 
The focal spread construction in Section \ref{sec:strictly} gives the
only known example of a strictly functional repair code.  However, it
is not clear whether a generalisation to larger fields or to higher
dimensions would retain this property.  Indeed, it is not even clear
whether one could still have a succinct description of the repair
process.  This indicates that there is still much to understand about
this interesting structure.  It is also not clear whether the
distilling of the properties of functional repair from this focal
spread construction into a non-recursive definition (Section
\ref{sec:simpler}) may be generalised. Again, this indicates that
further study of this structure may be profitable.

The view of a functional repair code as a digraph allows some
characterisation of exact repair codes.  However, as yet it is not
clear when a digraph with the right properties are actually realisable
as a functional repair code.  Another aspect to consider is: given a
digraph, is it always possible to ``complete'' it so
that it satisfies the repair condition or are there cases where this is 
impossible?


\begin{thebibliography}{99}
\bibitem{Babu-Kumar}
B. S. Babu and P. Vijay Kumar.
\newblock A tight lower bound on the sub-packetization level of optimal-access {MSR} and {MDS} codes.
\newblock 2018 IEEE International Symposium on Information Theory (ISIT),
Vail, Colorado, USA.  June 2018.  Available at \url{https://arxiv.org/abs/1710.05876}.

\bibitem{TotalRecall}
R.~Bhagwan, K.~Tati, Y.~C.~Cheng, S.~Savage and G.~M.~Voelker.
\newblock Total Recall: System support for automated availability management.
\newblock In \emph{Proceedings of the 1st conference on Symposium on Networked Systems Design and Implementation (NSDI'04)}, Vol. 1. USENIX Association, Berkeley, CA, USA, pp. 25--25.


\bibitem{NetworkCoding}
A. G. Dimakis, P. B. Godfrey, Y. Wu, M. J. Wainwright and K. Ramchandran.
\newblock Network coding for distributed storage systems.
\newblock IEEE Transactions on Information Theory, 56(9):4539--4551, Sept. 2010.

\bibitem{epsilonMSR}
V. Guruswami, S. V. Lokam and S. V. M. Jayaraman.
\newblock $\epsilon$-{MSR} codes: Contacting fewer code blocks for exact repair.
\newblock 2018 IEEE International Symposium on Information Theory (ISIT),
Vail, Colorado, USA.  June 2018.  Available at \url{https://arxiv.org/abs/1807.01166}.

\bibitem{Hirschfeld3D}
J.~W.~P.~Hirschfeld.
\newblock Finite projective spaces of three dimensions.
\newblock Oxford University Press.  1985.


\bibitem{Hirschfeld}
J.~W.~P.~Hirschfeld.
\newblock Projective geometries over finite fields (2nd ed).
\newblock Oxford University Press.  1998.


\bibitem{HirschfeldThas}
J.~W.~P.~Hirschfeld and J.~A.~Thas.
\newblock General galois geometries.
\newblock Oxford University Press.  1991.

\bibitem{HollmannPoh}
H. D. L. Hollmann and W. Poh.
\newblock Characterizations and construction methods for linear functional-repair storage codes.
\newblock 2013 IEEE International Symposium on Information Theory (ISIT), Istanbul, Turkey, July 2013, pp. 336--340.  Available at \url{http://arxiv.org/abs/1511.02924}. 

\bibitem{HouLi}
H. Hou and H. Li.
\newblock Practical self-repairing codes for distributed storage.
\newblock IEEE Asia Pacific Cloud Computing Congress (APCloudCC), Shenzhen, 2012, pp. 76--81.

\bibitem{HouLiShum}
H. Hou, H. Li and K. W. Shum.
\newblock General self-repairing codes for distributed storage systems.
\newblock IEEE International Conference on Communications (ICC), Budapest, 2013, pp. 4358--4362.

\bibitem{JhaJohnson}
V. Jha and N. L. Johnson.
\newblock Vector space partitions and designs. Part I - Basic Theory.
\newblock Note di Matematica 29(2):165-189, 2009. 

\bibitem{MacWilliamsSloane}
F.~J.~MacWilliams  and N.~J.~A.~Sloane
\newblock The theory of error-correcting codes.
\newblock North Holland. 1983.

\bibitem{NamSong}
M. Y. Nam and H. Y. Song.
\newblock Binary locally repairable codes with minimum distance at least 
six based on partial $t$-spreads.
\newblock IEEE Communications Letters 21(8):1683--1686, Aug. 2017.

\bibitem{OggierDatta}
F. Oggier and A. Datta.
\newblock Self-repairing codes for distributed storage - A projective geometric construction.
\newblock IEEE Information Theory Workshop, Paraty, 2011, pp. 30--34.

\bibitem{RAID}
D.~A.~Patterson, G.~Gibson and R.~H.~Katz.
\newblock A case for {R}edundant {A}rrays of {I}nexpensive {D}isks ({RAID}).
\newblock In \emph{Proc. ACM SIGMOD international conference on management 
of data}, Chicago, USA, Jun. 1988, pp. 109--116.

\bibitem{productmatrix}
K.~V.~Rashmi, N.~B.~Shah and P.~Vijay Kumar.
\newblock Optimal exact-regenerating codes for distributed storage at the MSR and MBR points via a product-matrix construction.
\newblock IEEE Transactions on Information Theory, 57(8):5227--5239, 2011.

\bibitem{subspacecode}
N. Raviv and T. Etzion. 
\newblock Distributed storage systems based on intersecting subspace codes. 
\newblock 2015 IEEE International Symposium on Information Theory (ISIT), Hong Kong, 2015, pp. 1462--1466.

\bibitem{SmallFields}
N. Raviv, N. Silberstein and T. Etzion.
\newblock Constructions of high-rate minimum storage regenerating codes over small fields.
\newblock IEEE Transactions on Information Theory, 63(4):2015--2038, April 2017.

\bibitem{SahraeiGastpar}
S. Sahraei and M. Gastpar.
\newblock Increasing Availability in Distributed Storage Systems via Clustering.
\newblock 2018 IEEE International Symposium on Information Theory (ISIT),
Vail, Colorado, USA.  June 2018.  
Available at \url{https://arxiv.org/abs/1710.02653}.

\bibitem{ShumHu}
K. W. Shum and Y. Hu, 
\newblock Functional-repair-by-transfer regenerating codes.
\newblock 2012 IEEE International Symposium on Information Theory, Cambridge, MA, 2012, pp. 1192--1196.

\bibitem{Silberstein}
N. Silberstein. 
\newblock Fractional repetition and erasure batch codes.
\newblock In \emph{Coding Theory and Applications}, edited by R. Pinto, P. Rocha Malonek, P. Vettori. CIM Series in Mathematical Sciences, vol 3. Springer, Cham, 2015.

\bibitem{SilbersteinEtzion}
N. Silberstein and T. Etzion. 
\newblock Optimal fractional repetition codes and fractional repetition 
batch codes.
\newblock 2015 IEEE International Symposium on Information Theory, Hong Kong, 2015, pp. 2046--2050.

\bibitem{local-avail}
N. Silberstein, T. Etzion and M. Schwartz.
\newblock Locality and availability of array codes constructed from subspaces.
\newblock 2017 IEEE International Symposium on Information Theory (ISIT), Aachen, 2017, pp. 829--833.

\bibitem{scalar}
K. Shanmugam,  D.~S. Papailiopoulos, A.~G. Dimakis and G.~Caire.
\newblock A repair framework for scalar {MDS} codes.
\newblock IEEE Journal on Selected Areas in Communications, 32(5):998--1007, 
May 2014.

\bibitem{SohnChoiMoon}
J. Sohn, B. Choi and J. Moon.
\newblock A class of {MSR} codes for clustered distributed storage.
\newblock 2018 IEEE International Symposium on Information Theory (ISIT),
Vail, Colorado, USA.  June 2018.  
Available at \url{https://arxiv.org/abs/1801.02014}.

\bibitem{Vajha-Babu-Kumar}
M. Vajha, B. S. Babu and  P. Vijay Kumar.
\newblock Explicit {MSR} codes with optimal access, optimal sub-packetization and small field size for $d=k+1$, $k+2$, $k+3$.
\newblock 2018 IEEE International Symposium on Information Theory (ISIT),
Vail, Colorado, USA.  June 2018.  
Available at \url{https://arxiv.org/abs/1804.00598}.

\bibitem{AlmostAffine}
T. Westerb\"{a}ck, T. Ernvall and C. Hollanti.
\newblock Almost affine locally repairable codes and matroid theory.
\newblock 2014 IEEE Information Theory Workshop (ITW 2014), Hobart, TAS, 2014, pp. 621--625.

\bibitem{Matroid}
T. Westerb\"{a}ck, R. Freij-Hollanti, T. Ernvall and C. Hollanti. 
\newblock On the combinatorics of locally repairable codes via matroid theory. 
\newblock IEEE Transactions on Information Theory, vol. 62, no. 10, pp. 5296--5315, Oct. 2016.

\bibitem{YeBarg}
M. Ye and A. Barg.
\newblock Cooperative repair: Constructions of optimal MDS codes for all admissible parameters.
\newblock 2018 IEEE International Symposium on Information Theory (ISIT),
Vail, Colorado, USA.  June 2018.  
Available at \url{https://arxiv.org/abs/1801.09665}.

\bibitem{ZorguiWang}
M. Zorgui and Z. Wang.
\newblock On the achievability region of regenerating codes for multiple erasures.
\newblock 2018 IEEE International Symposium on Information Theory (ISIT),
Vail, Colorado, USA.  June 2018.  
Available at \url{https://arxiv.org/abs/1802.00104}.
\end{thebibliography}
\end{document}